\newcommand{\dbtilde}[1]{\accentset{\approx}{#1}}
\newtheorem{prop}{{{{Proposition}}}}
\newtheorem{remark}{Remark}
\newtheorem{theorem}{Theorem}
\title{\LARGE \bf
Koopman Operator Based Modeling for Quadrotor Control on $SE(3)$
}
\author{Vrushabh Zinage \and Efstathios Bakolas
\thanks{Vrushabh Zinage (graduate student) and Efstathios Bakolas (Associate Professor) are with the Department of Aerospace Engineering and Engineering Mechanics,
The University of Texas at Austin, Austin, Texas 78712-1221, USA, {Emails:\tt\small vrushabh.zinage@utexas.edu; bakolas@austin.utexas.edu}}
\thanks{This research has been supported  in part by NSF award CMMI-1937957.}
}
\begin{document}

\bibliographystyle{IEEEtran} 

\maketitle
\thispagestyle{empty}
\pagestyle{empty}

\begin{abstract}
In this paper, we propose a Koopman operator based approach to describe the nonlinear dynamics of a quadrotor on $SE(3)$ in terms of an infinite-dimensional linear system which evolves in the space of observable functions (lifted space) and which is more appropriate for control design purposes. The major challenge when using the Koopman operator is the characterization of a set of observable functions that can span the lifted space. Recent methods either use tools from machine learning to learn the observable functions or guess a suitable set of observables that best describes the nonlinear dynamics. Instead of guessing or learning the observables, in this work we derive them in a systematic way for the quadrotor dynamics on $SE(3)$. In addition, we prove that the proposed sequence of observable functions converges pointwise to the zero function, which allows us to select only a finite set of observable functions to form (an approximation of) the lifted space. Our theoretical analysis is also confirmed by numerical simulations which demonstrate that by increasing the dimension of the lifted space, the derived linear state space model can approximate the nonlinear quadrotor dynamics more accurately. 
\end{abstract}

\section{INTRODUCTION}
We consider the problem of forming a \textit{lifted} space over which the nonlinear dynamics of a quadrotor on $SE(3)$ can be described by a (possibly infinite-dimensional) linear system. The approach utilized herein, which relies on the framework of Koopman operator, allows one to account for the nonlinearities of the dynamics of the quadrotor while at the same time linear control design techniques are still applicable. The major challenge in using the Koopman operator for approximating nonlinear dynamics is finding a suitable set of \textit{observable} functions (or \textit{observables}) that can serve as basis functions for the lifted space. In this paper, we propose a systematic way to derive a set of observable functions for the quadrotor dynamics on $SE(3)$. A subset of the latter set is chosen to form the truncated (approximation of the) \enquote{lifted} linear state space model for the quadrotor dynamics. In prior literature \cite{mellinger2012trajectory_quad_dyn_1,hoffmann2008quadrotor_quad_dyn_2}, control design methods for quadrotors are based on linearization of their dynamics around a reference trajectory or a fixed point to facilitate the use of linear control design tools such as linear Model Predictive Control (MPC) \cite{bangura2014real_mpc_1}. These methods achieve reduced computational overhead compared to Nonlinear MPC (NMPC), thereby allowing more on-board processing power for other applications like communication and perception. However, the desired accuracy from linearized based models cannot be guaranteed for large  deviations from the linearization point. Other more sophisticated linearization methods like Carleman linearization \cite{kowalski1991nonlinear_carleman} cannot be used for quadrotors as their applicability is limited to polynomial vector fields.

Quadrotor systems have received a lot of attention in the robotics and control communities. The strong nonlinear coupling and underactuated characteristics of quadrotor dynamics make the design of an effective controller more difficult than other mobile robots. In this paper, we use a {Koopman operator based approach to derive a set of observable functions which will allow us to approximate the nonlinear dynamics of a quadrotor on $SE(3)$ by a higher dimensional linear state space model which is more appropriate for (model based) control design purposes}.

{\textit{Literature review:}} Koopman operator-based methods have been widely used for approximating nonlinear systems \cite{budivsic2012applied_koopman_1,mauroy2016global_koopman_2,mezic2005spectral_koopman_3,surana2018koopman_koopman_4,klus2018data_koopman_5,susuki2011nonlinear_koopman_6}. Extensions of these methods for controlled systems have been proposed in \cite{korda2018linear_koopman_mpc_automatica,proctor2016dynamic_koopman_controls_1,kaiser2020data_koopman_controls_2,far_rendezvous_zinage_bakolas_acc_2021,you2018deep_koopman_controls_3,ma2019optimal_koopman_controls_4}. Koopman-based approaches have been proposed for robotic applications in~\cite{abraham2019active_data_quad_dyn,bruder2020data_soft_robots}. The main challenge in using Koopman operator based methods is choosing the right set of observable functions for modeling purposes \cite{brunton2016koopman_observable_limitation}.  Approaches using neural networks \cite{lusch2018deep_nn_1,yeung2019learning_deep_nn_2}, reproducing kernels \cite{kawahara2016dynamic_reproducing_kernels}, and basis functions \cite{williams2015data_basis_function} have also been proposed for estimating the Koopman operator. However, these methods require prior data and a suitable finite set of observable functions which are typically chosen in a heuristic way. In addition, there is no guarantee that the data collected is rich enough for estimating the Koopman operator that can approximate the nonlinear dynamics accurately enough. \cite{netto2020analytical} proposes a method for analytically finding the observables for nonlinear dynamics which can be described as a linear combination of elementary functions like sine, cosine and exponential.
In \cite{chen2020koopman_attitude}, a set of observable functions are derived and thereafter a Koopman based LQR controller is proposed for the spacecraft attitude dynamics.

{\textit{Main contributions:}} In contrast with \cite{chen2020koopman_attitude}, which has inspired this work, we derive a set of observable functions for the quadrotor dynamics on $SE(3)$ in which the attitude and position of the quadrotor are coupled in a nonlinear fashion. Subsequently, we use these functions to form the lifted-space in which the quadrotor dynamics is described by a linear state space model. {This linear model is equivalent to the nonlinear model as the dimension of the lifted space tends to infinity.} In addition, we prove the pointwise convergence of the observable functions to the zero function as the dimension of the lifted state space goes to infinity. This allows us to truncate the higher dimensional lifted space to a lower (finite) dimensional lifted space.
The main advantage of using the derived set of observable functions is that it does not require prior data of state and control input pairs for using the Koopman operator based control approaches. To the best of our knowledge, this is the first paper which derives a set of observable functions for both position and attitude dynamics of a quadrotor on $SE(3)$.

{\textit{Structure of the paper:}} The organization of the paper is as follows. In Section \ref{sec:prelimanaries}, we introduce the nonlinear state space model for a quadrotor followed by an overview of the Koopman operator. In Section \ref{sec:derivation_of_observable_functions}, we provide 
the derivation of the set of observable functions and then the formulation of the lifted space linear dynamics. Numerical simulations are presented in Section \ref{sec:results}, and finally Section \ref{sec:conclusions} presents concluding remarks.

\section{Preliminaries\label{sec:prelimanaries}}
\subsection{Nomenclature\label{subsection:nomenclature}}
Given a vector $\boldsymbol{a}\in\mathbb{R}^n$, let $(\boldsymbol{a})_i$ denote its $i^\text{th}$ element and let $|\boldsymbol{a}|$ denote its Euclidean norm. Given $\boldsymbol{a}=[a_1,\;a_2,\;a_3]^\mathrm{T}\in\mathbb{R}^3$, we denote by ${a}^\times\in\mathbb{R}^{3\times 3}$ the matrix which is such that the cross product $\boldsymbol{a}\times\boldsymbol{b}={a}^{\times}\boldsymbol{b}$ for all $\boldsymbol{b}\in\mathbb{R}^3$. The set of natural numbers is denoted by $\mathbb{N}$. {Given $a,b\in \mathbb{N}$ with $a\leq b$, we denote the discrete interval from $a$ to $b$ as $[a,b]_d$, where $[a,b]_d:=[a,b] \cap \mathbb{N}$}. Let $\mathrm{bdiag}(D_1,\dots,D_k)$ denote the block diagonal matrix comprising of matrices $D_1,\dots,D_k$. Let $\mathbf{0}_{n \times m}$ (or simply $\mathbf{0}$) denote the $n\times m$ zero matrix, $\mathbf{0}_{n}$ denote the $n\times n$ zero matrix and ${I}_n$ denote the $n\times n$ identity matrix. In addition, we denote by $\underline{A}=\texttt{vec}(A)$, the vector that is formed by concatenating the columns that comprise the matrix $A$. The trace operator is denoted as $\texttt{tr}(\cdot)$. Let $\|X\|_F$ denotes the Frobenius norm for matrix $X\in\mathbb{R}^{n\times m}$ where $\|X\|_F:=|\texttt{vec}(X)|$. For a matrix $X$, we denote by $X^{\dagger}$ its Moore-Penrose inverse. In addition, let $\otimes$ denote the Kronecker product. {By $SE(3)$, we denote the Special Euclidean group which can be represented as follows:}
$
{SE(3)=\left\{\mathbf{A} \mid \mathbf{A}=\left[\begin{smallmatrix}
{R} & \boldsymbol{p} \\
\mathbf{0} & 1
\end{smallmatrix}\right],~\boldsymbol{p} \in \mathbb{R}^3,~RR^\mathrm{T}={R}^\mathrm{T} {R}={I}_3\right\}}$. {Let $J\in\mathbb{R}^{3\times 3}$ denote the quadrotor's inertia (positive definite) matrix, $\boldsymbol{p}\in\mathbb{R}^{3}$ its position in the inertial frame, and $R\in \mathrm{SO}(3)$ its rotation matrix from body-fixed frame to inertial frame (see Fig. \ref{fig:quadrotor}). Finally, let $\boldsymbol{v}\in\mathbb{R}^3$ and $\boldsymbol{\omega}\in\mathbb{R}^3$ denote the linear and angular velocities of the quadrotor in the body-fixed frame, respectively, and $m$ its mass. }
\subsection{Quadrotor dynamics\label{sec:quadrotor_dynamics}}
\begin{figure}[h]
\centering
\includegraphics[width=4cm]{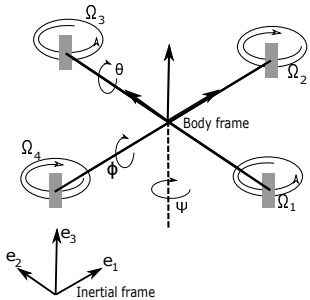}
\caption{Inertial and body fixed frame of the quadrotor}
\label{fig:quadrotor}
\end{figure}

In this section, the dynamics of the quadrotor are discussed. 
The quadrotor is an under-actuated system whose motion can be described as follows~\cite{abraham2019active_data_quad_dyn,lee2010geometric}:
\begin{align}
& \dot{h}=h\left[\begin{smallmatrix}
{\omega}^\times & \boldsymbol{v} \\
\mathbf{0} & 0
\end{smallmatrix}\right],\;J \dot{\boldsymbol{\omega}} =M+J \omega^\times \boldsymbol{\omega},\nonumber\\
& \dot{\boldsymbol{v}} =({1}/{m}) F \boldsymbol{e}_{3}-\omega^\times \boldsymbol{v}-\boldsymbol{g} R^\mathrm{T} \boldsymbol{e}_{3},
\label{eqn:nonlinear_quad_dynamics}
\end{align}
where $h\in SE(3)$, 
$g$ is the acceleration due to gravity and $\boldsymbol{e}_3=[0,0,1]^\mathrm{T}$.
Finally, 
\begin{align}
&F=k_{t}\left(u_{1}+u_{2}+u_{3}+u_{4}\right),\nonumber \\
&M=[k_{t} l(u_{2}-u_{4}), k_{t} l(u_{3}-u_{1}),  
k_{m}(u_{1}-u_{2}+u_{3}-u_{4})]^\mathrm{T}\nonumber
\end{align}
where $\boldsymbol{u}=\left[u_{1}, u_{2}, u_{3}, u_{4}\right]^\mathrm{T}$ is the control input, $k_t$, $k_m$ and $l$ are model parameters which are positive real numbers. In particular, $u_i=\Omega_i^2$ for all $i\in[1,4]_d$, where $\Omega_i$ are the angular speeds of the motors. 

Now consider the modified control input $\widetilde{\boldsymbol{u}}=[\widetilde{u}_1,\;\widetilde{u}_2,\;\widetilde{u}_3,\;\widetilde{u}_4]^\mathrm{T}$ defined as follows:
\begin{align}
&\boldsymbol{\dbtilde{u}}=[\widetilde{u}_1,\;\widetilde{u}_2,\;\widetilde{u}_3]^\mathrm{T}=M+J\omega^\times\boldsymbol{\omega},\nonumber\\
&  \widetilde{{u}}_4={(\dot{\boldsymbol{v}}})_3=\boldsymbol{e}_3^\mathrm{T}\big( (1/m) F \boldsymbol{e}_{3}-\omega^\times \boldsymbol{v}-\boldsymbol{g} R^\mathrm{T} \boldsymbol{e}_{3}\big).
\label{eqn:u_bar_4}
\end{align}
Then, the quadrotor dynamics given in Eqn. \eqref{eqn:nonlinear_quad_dynamics} can be written more compactly as follows:
\begin{align}
& \dot{h}=hS,~~~~
[\dot{\boldsymbol{\omega}}^\mathrm{T}\;\;{(\boldsymbol{\dot{v}})_3}^\mathrm{T}]^\mathrm{T}=\mathcal{J}\widetilde{\boldsymbol{u}},
\label{eqn:quadrotor_dynamics_compact}
\end{align}
where $\mathcal{J}\in\mathbb{R}^{4\times4}$, $S\in\mathbb{R}^{4\times4}$ and $h\in SE(3)$ are given by
\begin{align*}
  \mathcal{J}:=  \left[
\begin{smallmatrix}
J^{-1} & 0\\
\mathbf{0}& 1
\end{smallmatrix}\right],\;\;     S:=\left[\begin{smallmatrix}
{\omega}^\times & \boldsymbol{v} \\
\mathbf{0} & 0
\end{smallmatrix}\right],\;\;   h:=\left[\begin{smallmatrix}
R & \boldsymbol{p} \\
\mathbf{0} & 1
\end{smallmatrix}\right].
\end{align*}

\subsection{Brief review of Koopman operator\label{sec:koopman_operator_review}}
{Consider a controlled dynamical system $\dot{\boldsymbol{x}}=\boldsymbol{f}(\boldsymbol{x}, \boldsymbol{u})$ whose state evolution is described as follows:}
\begin{align}
\boldsymbol{x}\left(t_{i}+t_{s}\right) &= F_{t_s}\left(\boldsymbol{x}\left(t_{i}\right); \boldsymbol{u}_{\left[t_{i}, t_i+t_s\right]}\right)\nonumber \\
&=\boldsymbol{x}\left(t_{i}\right)+\int_{t_{i}}^{t_{i}+t_{s}} f(\boldsymbol{x}(\sigma), \boldsymbol{u}(\sigma) \mathrm{d} \sigma\nonumber
\end{align}
{where $t_i$
is the $i^{\text{th}}$ sampling time and $t_s \geq 0$ is the sampling interval, $\boldsymbol{x}(t)\in\mathbb{R}^n$ is the state of the system at time $t$,
$\boldsymbol{u}(t)\in\mathbb{R}^m$ is the control input at time $t$ and $\boldsymbol{u}_{\left[t_{i}, t_i+t_s\right]} :=\{\boldsymbol{u}(t): t\in [t_i+t_s] \}$, and finally, $F_{t_s}$
is the flow mapping which advances the state of the $\boldsymbol{x}$ at time $t_i$ to the state $\boldsymbol{x}'$ at time $t = t_i + t_s$ when the system is driven by the input signal $\boldsymbol{u}(t)$ for $t \in [t_i,t_i+t_s]$. The Koopman operator $\mathcal{K}$ is a linear infinite-dimensional (composition) operator that acts on functions known as observables or observable functions which belong to a function space $\mathcal{F}$. We refer to $\mathcal{F}$ as the lifted space or the space of observables. In particular, given a countably infinte collection of observable functions $\boldsymbol{b}=(b_1(\boldsymbol{x}(t)), b_2(\boldsymbol{x}(t)), \dots)$, where $b_{i}:\mathbb{R}^n\rightarrow\mathbb{R}$, the Koopman operator $\mathcal{K}: \mathcal{F} \rightarrow \mathcal{F}$ is defined as follows:}
\begin{align}
    \mathcal{K} \boldsymbol{b}\left(\boldsymbol{x}\left(t_{i}\right)\right)=\boldsymbol{b}\left(F_{t_s}\left(\boldsymbol{x}\left(t_{i}\right); \boldsymbol{u}_{\left[t_{i}, t_i+t_s\right]}\right)\right)=\boldsymbol{b}\left(\boldsymbol{x}\left(t_{i+1}\right)\right). \nonumber
\end{align}
{Note that the Koopman operator $\mathcal{K}$ is a linear operator which takes the observation of state ${b}_i(\boldsymbol{x}(t_i))$ at time $t=t_i$ and shifts it to the next observation at time $t=t_{i+1}$. In contrast to standard linearization methods used for approximation of nonlinear systems, which become inaccurate away from the linearization point, the Koopman operator can describe the exact evolution of the observables of a nonlinear system globally (except from, perhaps, chaotic systems). Finding, however, a suitable set of observables that can span the lifted space $\mathcal{F}$ can be a challenging task as there is no systematic way to construct these functions in general. In practice, one has to work with a finite collection of observables (truncation of the countably infinite collection of observables that span $\mathcal{F}$), which is represented as a vector  $\boldsymbol{b}(\boldsymbol{x})\in\mathbb{R}^N$, where  
$\boldsymbol{b}(\boldsymbol{x})=[{b}_1(\boldsymbol{x}),\;{b}_2(\boldsymbol{x}), \dots, {b}_{N}(\boldsymbol{x})]^\mathrm{T},\nonumber$ for $N\gg n$. The vector $\boldsymbol{b}(x)$ is often referred to as the \textit{lifted state} as it corresponds to the state of the system in the lifted state space.}

\section{Derivation of observable functions\label{sec:derivation_of_observable_functions}}
 We will now present a systematic way to derive a sequence of observable functions that will allow us to form the lifted space for the quadrotor dynamics governed by Eqn. \eqref{eqn:quadrotor_dynamics_compact}.
 
 \begin{theorem}
 {The lifted (function) space of the quadrotor dynamics is spanned by the following (countably infinite) collection of observable (basis) functions:}
 \begin{align}
     {\boldsymbol{b}= (b_1, b_2, \dots) =\left( \boldsymbol{\omega},(\boldsymbol{v})_3, \{\underline{g}_{k}\}_{k=0}^{\infty}, \{ f_k \}_{k=0}^{\infty} \right),}
     \label{eq:basisfun}
 \end{align}
 {where $g_k=hS^k$ and $f_k=\omega^{\times k}\boldsymbol{v}$, for $k\in \mathbb{N}$.}
 \end{theorem}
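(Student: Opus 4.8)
The plan is to establish two facts: (i) the full state of \eqref{eqn:quadrotor_dynamics_compact} can be read off from $\boldsymbol{b}$, and (ii) $\mathrm{span}(\boldsymbol{b})$ is invariant under the Koopman generator of \eqref{eqn:quadrotor_dynamics_compact}, i.e.\ along \eqref{eqn:quadrotor_dynamics_compact} the time derivative of every entry of $\boldsymbol{b}$ is a finite linear combination of the entries of $\boldsymbol{b}$, with coefficients that are constants or functions of the modified input $\widetilde{\boldsymbol{u}}$ only. Fact (i) is immediate: $h=g_0$, $\boldsymbol{\omega}=b_1$, and $\boldsymbol{v}=f_0$ since $\omega^{\times 0}=I_3$; moreover $h$ has a constant entry, so every constant is linear in $\underline{g}_0$. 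Facts (i) and (ii) together identify $\mathrm{span}(\boldsymbol{b})$ with the smallest Koopman-invariant function space containing the state, which is the lifted space, and show that $\boldsymbol{b}$ is a (countable) basis of it. The work is in (ii), which I would prove by induction on the index $k$ of the families $\{\underline{g}_k\}$ and $\{f_k\}$.

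\emph{Base cases.} From \eqref{eqn:quadrotor_dynamics_compact} and \eqref{eqn:u_bar_4} one reads $\dot{\boldsymbol{\omega}}=J^{-1}\boldsymbol{\dbtilde{u}}$ and $\tfrac{\mathrm{d}}{\mathrm{d}t}(\boldsymbol{v})_3=\widetilde{u}_4$, which depend on the input alone. Next, $\dot g_0=\dot h=hS=g_1$. For $f_0=\boldsymbol{v}$, expanding $\dot{\boldsymbol{v}}$ via \eqref{eqn:nonlinear_quad_dynamics} and \eqref{eqn:u_bar_4} gives $\dot f_0=-(I_3-\boldsymbol{e}_3\boldsymbol{e}_3^{\mathrm{T}})\bigl(f_1+gR^{\mathrm{T}}\boldsymbol{e}_3\bigr)+\widetilde{u}_4\boldsymbol{e}_3$, in which $R^{\mathrm{T}}\boldsymbol{e}_3$ (a row of $R$) is linear in $\underline{g}_0$; hence $\dot f_0\in\mathrm{span}(\boldsymbol{b})$.

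\emph{Inductive step.} For $g_k=hS^k$, the Leibniz rule and $\dot h=hS$ yield
\begin{align}
\dot g_k \;=\; \dot h S^k + h\tfrac{\mathrm{d}}{\mathrm{d}t}(S^k) \;=\; g_{k+1} + \sum_{j=0}^{k-1} hS^{j}\,\dot S\,S^{k-1-j}. \nonumber
\end{align}
I would then use the block-triangular forms $S^{j}=\left[\begin{smallmatrix}\omega^{\times j} & f_{j-1}\\ \mathbf{0} & 0\end{smallmatrix}\right]$ and $\dot S=\left[\begin{smallmatrix}(J^{-1}\boldsymbol{\dbtilde{u}})^{\times} & \dot{\boldsymbol{v}}\\ \mathbf{0} & 0\end{smallmatrix}\right]$ together with the identities $\boldsymbol{a}^{\times}\boldsymbol{b}^{\times}=\boldsymbol{b}\boldsymbol{a}^{\mathrm{T}}-(\boldsymbol{a}^{\mathrm{T}}\boldsymbol{b})I_3$ and $RR^{\mathrm{T}}=I_3$ to reduce each summand $hS^{j}\dot S S^{k-1-j}$ to a matrix whose nonzero blocks are of the form $R\omega^{\times m}$, $R\omega^{\times m}\boldsymbol{v}=Rf_m$, and $R^{\mathrm{T}}\boldsymbol{e}_3$ scaled by $\widetilde{\boldsymbol{u}}$-dependent quantities, plus constants; but $R\omega^{\times m}$ is a block of $g_m$, $Rf_m$ is a block of $g_{m+1}$, and $R^{\mathrm{T}}\boldsymbol{e}_3$ is linear in $\underline{g}_0$, so $\dot g_k\in\mathrm{span}(\boldsymbol{b})$. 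Analogously, for $f_k=\omega^{\times k}\boldsymbol{v}$,
\begin{align}
\dot f_k \;=\; \sum_{i=0}^{k-1}\omega^{\times i}\,(J^{-1}\boldsymbol{\dbtilde{u}})^{\times}\,\omega^{\times(k-1-i)}\boldsymbol{v} \;+\; \omega^{\times k}\dot{\boldsymbol{v}}, \nonumber
\end{align}
where the second summand contributes $-f_{k+1}$, the term $-g\,\omega^{\times k}R^{\mathrm{T}}\boldsymbol{e}_3=(-1)^{k+1}g\,(R\omega^{\times k})^{\mathrm{T}}\boldsymbol{e}_3$, which is a row of the block $R\omega^{\times k}$ of $g_k$ (hence linear in $\underline{g}_k$), and input terms, while the first sum carries a single factor $(J^{-1}\boldsymbol{\dbtilde{u}})^{\times}$ and collapses, by the same skew-symmetric identity, to a $\widetilde{\boldsymbol{u}}$-dependent linear combination of $\{f_j\}_{j\le k}$ and $\boldsymbol{\omega}$. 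This closes the induction and, with Fact (i), proves the theorem.

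\emph{Main obstacle.} The delicate part is the closure check inside the inductive step: one must verify that \emph{every} matrix product $hS^{j}\dot S S^{k-1-j}$ and $\omega^{\times i}(J^{-1}\boldsymbol{\dbtilde{u}})^{\times}\omega^{\times(k-1-i)}\boldsymbol{v}$ collapses --- through the $\mathrm{SO}(3)$ identity $RR^{\mathrm{T}}=I_3$ and the skew-symmetric product identity --- into $\widetilde{\boldsymbol{u}}$-scaled linear combinations of the $b_j$ with \emph{no} residual function lying outside $\mathrm{span}(\boldsymbol{b})$. This is exactly why the observable $g_k=hS^k$ is chosen in this particular form: it is tailored so that the cross-terms $R\omega^{\times m}$ and $Rf_m$ produced whenever $R$ meets a power of $S$ are re-absorbed into the same family, and it is also why the whole family $\{f_k\}$ and the auxiliary scalars $\boldsymbol{\omega}$, $(\boldsymbol{v})_3$ must be carried along --- they are precisely the pieces needed to keep $\dot{\boldsymbol{v}}$ and the lower-order remainders of $\dot g_k$, $\dot f_k$ inside the span; omitting any of them breaks the invariance.
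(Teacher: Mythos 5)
Your computational skeleton is the same as the paper's: apply the Leibniz rule to $g_k=hS^k$ and $f_k=\omega^{\times k}\boldsymbol{v}$ to obtain the chains $\dot g_k=g_{k+1}+h\sum_{i=1}^{k}S^{(i-1)}\dot S S^{(k-i)}$ and $\dot f_k=-f_{k+1}+\sum_{i=1}^{k}\omega^{\times(i-1)}\dot{\omega}^{\times}\omega^{\times(k-i)}\boldsymbol{v}$, and your handling of the gravity term $-gR^{\mathrm{T}}\boldsymbol{e}_3$ in $\dot f_0$ (as linear in $\underline{g}_0$) is in fact more careful than the paper, which simply writes $\dot f_0=-\omega^{\times}\boldsymbol{v}$. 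The genuine gap is in your closure claim (ii): the residual sums do \emph{not} collapse into linear combinations of the $b_j$ with coefficients depending only on $\widetilde{\boldsymbol{u}}$. Concretely, the summand of $\dot f_2$ with $i=2$ reduces, by the very identity you invoke, to
\begin{align}
\omega^{\times}(J^{-1}\dbtilde{\boldsymbol{u}})^{\times}\boldsymbol{v}
=(\boldsymbol{\omega}^{\mathrm{T}}\boldsymbol{v})\,J^{-1}\dbtilde{\boldsymbol{u}}
-\big(\boldsymbol{\omega}^{\mathrm{T}}J^{-1}\dbtilde{\boldsymbol{u}}\big)\,\boldsymbol{v},\nonumber
\end{align}
whose entries contain the symmetric products $(\boldsymbol{\omega})_i(\boldsymbol{v})_j$. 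These are quadratic in the observables $\boldsymbol{\omega}$ and $f_0$ and lie outside $\mathrm{span}(\boldsymbol{b})$ for any input-only coefficients: the only degree-$(1,1)$ observables in $(\boldsymbol{\omega},\boldsymbol{v})$ are the entries of $f_1=\boldsymbol{\omega}\times\boldsymbol{v}$, whose span misses $\boldsymbol{\omega}^{\mathrm{T}}\boldsymbol{v}$, and the entries of the $\underline{g}_k$ all carry a factor of $R$. The same obstruction appears in $\dot g_k$: for $i\geq 2$ and $k-i\geq 1$ the block $R\,\omega^{\times(i-1)}\dot{\omega}^{\times}\omega^{\times(k-i)}$ reduces to $-\big(\boldsymbol{\omega}^{\mathrm{T}}J^{-1}\dbtilde{\boldsymbol{u}}\big)R\,\omega^{\times(k-2)}$, a product of the observable $\boldsymbol{\omega}$ with entries of $g_{k-2}$, again not input-linear in $\boldsymbol{b}$. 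So your induction does not close; to make it close you would have to enlarge $\boldsymbol{b}$ with the missing monomials (e.g.\ $(\boldsymbol{\omega})_i(\boldsymbol{v})_j$, $|\boldsymbol{\omega}|^2$) and re-check closure of \emph{their} derivatives.

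The paper never asserts your stronger invariance statement. It stops at the two chains above and packages each residual sum as $B_k(x)\widetilde{\boldsymbol{u}}$ with a \emph{state-dependent} matrix $B_k$ (Proposition \ref{prop:B_is_state_dependent}); the lifted model \eqref{eqn:lifted_space_dynamics} is therefore control-affine with constant $\mathcal{A}$ but a $\mathcal{B}$ that is nonlinear in the state, and the nonlinearity is deferred to the transformed input $\boldsymbol{U}^{\star}=\mathcal{B}\widetilde{\boldsymbol{u}}$, recovered afterwards by least squares in \eqref{prob:least}. Your "Fact (i)'' (recoverability of the state from $\boldsymbol{b}$) is a useful addition absent from the paper, but it does not rescue (ii).
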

\begin{proof}

Let $g_0:=h$. Then, $\dot{g_0}=\dot{h}=hS$. Let $g_1:= hS$.
Then,
\begin{align}
\dot{g}_{1} &=\dot{h}S+h\dot{S}=g_{1} S+h\dot{S}=g_2+h\dot{S},\nonumber
\end{align}
where ${g}_{2}:=g_{1} S $. Therefore, the time derivative of $g_{k}:=g_{k-1}S$ for all $k\in[1,N_1]_d$ satisfies
\begin{align}
\dot{g}_{k}= g_{k+1}+h \sum_{i=1}^{k} S^{(i-1)}\dot{S} S^{(k-i)}.
\label{eqn:g_k_attitude}
\end{align}
Now, vectorization of the matrices $h$ and $g_{k}$ yields
\begin{align}
\underline{\dot{{h}}}&=\underline{g}_{1},\quad \dot{\underline{g}}_{k}=\underline{g}_{k+1}+B_{k} \widetilde{\boldsymbol{u}},\quad\forall k\in[1,N_1]_d
\label{eqn:g_k_attitude_underline}
\end{align}
where $B_k\widetilde{\boldsymbol{u}}$ is obtained by vectorization of the matrix $h \sum_{i=1}^{k} S^{(i-1)}\dot{S} S^{(k-i)}$. 
Now let $f_0:=\boldsymbol{v}$. Then,
\begin{align}
    \dot{f_0}=\dot{\boldsymbol{v}}=-\omega^\times\boldsymbol{v}=-f_1,\nonumber
\end{align}
where $f_1:=\omega^\times f_0$, which implies that
\begin{align}
    \dot{f}_1&=\dot{{\omega}}^\times\boldsymbol{v}+{{\omega}}^\times\dot{\boldsymbol{v}}=({J}\dbtilde{\boldsymbol{u}})^\times\boldsymbol{v}-(\omega^\times)^2\boldsymbol{v}.\nonumber
\end{align}

Let $f_k:={\omega}^{\times } f_{k-1}$. Then, it follows that 
\begin{align}
    \dot{f_k}=-f_{k+1}+\sum_{i=1}^{k}{\omega}^{\times {(i-1)}}({J}\dbtilde{\boldsymbol{u}})^\times {\omega}^{\times {(k-i)}}\boldsymbol{v},
    \label{eqn:f_k_position}
\end{align}
for $k\in[1,N_2]_d$.
{By taking $N_1\rightarrow\infty$ and $N_2\rightarrow\infty$, we obtain the countably infinite collection of observable functions defined in \eqref{eq:basisfun} which spans the lifted (function) space of the quadrotor dynamics on $SE(3)$.}

\end{proof}
\begin{remark}
{Note that the observables $f_k$ are associated with the linear and angular velocities of the quadrotor whereas the observables $g_k$ with its position and attitude. Now, the finite (truncated) collection of observable / basis functions can be written as follows: $\boldsymbol{b}=$ $\left( b_1,\dots, b_N\right)=\big( \boldsymbol{\omega},(\boldsymbol{v})_3, \{\underline{g}_{k}\}_{k=0}^{N_1-1}, \{ f_k \}_{k=0}^{N_2-1} \big)$, where $[b_1,b_2,b_3]^\mathrm{T}:=\boldsymbol{\omega}$, $b_4:=(\boldsymbol{v})_3$, $[b_5\dots b_{20}]^\mathrm{T}:=\underline{g}_{0}, \dots$, ${[b_{16N_1+4+3(N_2-1)},\dots,b_N]^\mathrm{T}=\underline{f}_{N_2-1}}$ with $N=16N_1+3N_2+4$.}
\end{remark}
Based on Eqns. \eqref{eqn:g_k_attitude} and \eqref{eqn:f_k_position}, the lifted-state space linear dynamics for the quadrotor is given by
\begin{align}
    \dot{\boldsymbol{\mathcal{X}}}=\mathcal{A}\boldsymbol{\mathcal{X}}+\mathcal{B}\boldsymbol{\widetilde{u}},
    \label{eqn:lifted_space_dynamics}
\end{align}
where $\boldsymbol{\mathcal{X}}\in\mathbb{R}^N$ is the lifted state for the quadrotor dynamics and $\mathcal{A}\in\mathbb{R}^{N\times N}$ and $\mathcal{B}\in\mathbb{R}^{N\times 4}$ are given by
\begin{align}
& \boldsymbol{\mathcal{X}}=[ \boldsymbol{\omega}^\mathrm{T},(\boldsymbol{v})_3,\underline{g}_{0}^\mathrm{T}, \dots, \underline{g}_{{N_1-1}}^{\mathrm{T}},{f_0}^\mathrm{T},\dots,{f_{N_2-1}}^\mathrm{T} ]^\mathrm{T},\label{eqn:observables} \\
& \mathcal{A}=\mathrm{bdiag}(A_1,A_2,A_3),\;\; \mathcal{B}=[B,\;\mathbf{0}_{16\times 4},\;B_1,\;\dots,B_{N'}]^\mathrm{T},\nonumber
\end{align}
where $N={4+16N_1+3N_2}$, $N'=N_1+N_2$ and
\begin{align}
    & A_1=\mathcal{J},\quad A_2=\left[\begin{smallmatrix}\mathbf{0}_{16(N_1-1)\times16} & \mathbf{I}_{16(N_1-1)}\\
    \mathbf{0}_{16} & \mathbf{0}_{16\times 16(N_1-1)}\end{smallmatrix}\right],\nonumber\\
    & A_3=\left[\begin{smallmatrix}\mathbf{0}_{3(N_2-1)\times3} & -\mathbf{I}_{3(N_2-1)}\\
    \mathbf{0}_{3} & \mathbf{0}_{3\times 3(N_2-1)}\end{smallmatrix}\right].\nonumber
\end{align}
{Note that \eqref{eqn:lifted_space_dynamics} is an approximation of the original dynamics of the quadrotor given in \eqref{eqn:nonlinear_quad_dynamics} }. Furthermore, the matrices $B_k$ are such that $B_k\widetilde{\boldsymbol{{u}}}$ satisfies:
\begin{align}
    B_k\widetilde{\boldsymbol{{u}}}=\left\{ \begin{array}{l}
\texttt{vec}(h \sum_{i=1}^{k} S^{(i-1)}\dot{S} S^{(k-i)}),\;\; k\in[1,N_1]_d
\\
  \sum\limits_{i=1}^{k-N_1}{\omega}^{\times {(i-1)}}({J}\dbtilde{\boldsymbol{u}})^\times {\omega}^{\times {(k-i)}}\boldsymbol{v},
    k\in[N_1+1,N']_d
\end{array}
\right.\nonumber
\end{align}
The following result will be useful in subsequent discussion.
\begin{prop}
For all $k\in[1,N']_d$, the matrix $B_k$ is only state dependent. Thereafter, $\mathcal{B}$ is also a state-dependent matrix only.
\label{prop:B_is_state_dependent}
\end{prop}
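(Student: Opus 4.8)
The plan is to inspect the two explicit formulas that define $B_k \widetilde{\boldsymbol{u}}$ and verify that, once we substitute the definitions of $\widetilde{\boldsymbol{u}}$ back in, every quantity multiplying the control input $\boldsymbol{u}$ is a function of the state $(h,\boldsymbol{\omega},\boldsymbol{v})$ alone. Recall from \eqref{eqn:u_bar_4} that $\dbtilde{\boldsymbol{u}} = M + J\omega^\times\boldsymbol{\omega}$ and $\widetilde{u}_4 = \boldsymbol{e}_3^\mathrm{T}\dot{\boldsymbol{v}}$, and that $M$ and $F$ are \emph{linear} in $\boldsymbol{u}$ with constant coefficients $k_t,k_m,l$. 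Hence $\widetilde{\boldsymbol{u}} = \Gamma \boldsymbol{u} + \boldsymbol{\gamma}(\boldsymbol{\omega})$ for a constant matrix $\Gamma$ and a state-dependent offset $\boldsymbol{\gamma}$; this is not strictly needed, but it clarifies why the map $\boldsymbol{u}\mapsto B_k\widetilde{\boldsymbol{u}}$ is well defined.

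The first step is the case $k\in[1,N_1]_d$: here $B_k\widetilde{\boldsymbol{u}} = \texttt{vec}\big(h\sum_{i=1}^k S^{(i-1)}\dot S\, S^{(k-i)}\big)$. The matrix $S = \left[\begin{smallmatrix}\omega^\times & \boldsymbol{v}\\ \mathbf 0 & 0\end{smallmatrix}\right]$ depends only on $(\boldsymbol{\omega},\boldsymbol{v})$, and $h$ is a state variable, so the only term to worry about is $\dot S = \left[\begin{smallmatrix}\dot{\boldsymbol{\omega}}^\times & \dot{\boldsymbol{v}}\\ \mathbf 0 & 0\end{smallmatrix}\right]$. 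I would argue that $\dot S$, \emph{as it enters this expression}, is a linear function of $\widetilde{\boldsymbol{u}}$ with state-dependent coefficients: by \eqref{eqn:quadrotor_dynamics_compact}, $[\dot{\boldsymbol{\omega}}^\mathrm{T}\;(\dot{\boldsymbol{v}})_3]^\mathrm{T} = \mathcal J\widetilde{\boldsymbol{u}}$, so $\dot{\boldsymbol{\omega}} = J^{-1}\dbtilde{\boldsymbol{u}}$ and $(\dot{\boldsymbol{v}})_3 = \widetilde u_4$; the remaining two components of $\dot{\boldsymbol{v}}$ come from $\dot{\boldsymbol{v}} = (1/m)F\boldsymbol{e}_3 - \omega^\times\boldsymbol{v} - gR^\mathrm{T}\boldsymbol{e}_3$, whose dependence on $\boldsymbol{u}$ is through $F$ (linear, constant coefficients) and whose remaining terms are functions of $(\boldsymbol{\omega},\boldsymbol{v},R)$. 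Collecting the coefficient of each entry of $\widetilde{\boldsymbol{u}}$ (equivalently of $\boldsymbol{u}$) in $\texttt{vec}(h\sum_i S^{(i-1)}\dot S\,S^{(k-i)})$ gives the columns of $B_k$, each of which is manifestly a polynomial in the entries of $h$, $\boldsymbol{\omega}$, $\boldsymbol{v}$ and in $1/m$. The second step is the case $k\in[N_1+1,N']_d$: the formula $\sum_{i=1}^{k-N_1}\omega^{\times(i-1)}(J\dbtilde{\boldsymbol{u}})^\times\omega^{\times(k-i)}\boldsymbol{v}$ is already displayed as linear in $\dbtilde{\boldsymbol{u}}$ (through the single factor $(J\dbtilde{\boldsymbol{u}})^\times$) with coefficients built from $\omega^\times$ and $\boldsymbol{v}$; extracting the coefficient of each component of $\dbtilde{\boldsymbol{u}}$, and hence of $\boldsymbol{u}$, again yields a state-dependent $B_k$. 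Finally, since $\mathcal B = [B,\ \mathbf 0_{16\times 4},\ B_1,\dots,B_{N'}]^\mathrm{T}$ is assembled by stacking the $B_k$ together with the constant-free block $B$ (which likewise carries only the constant $k_t,k_m,l,1/m$ and state-dependent rotation/velocity terms from the definition of $\widetilde{\boldsymbol{u}}$), $\mathcal B$ inherits state-dependence and carries no dependence on $\boldsymbol{u}$; this proves the claim.

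The one place that deserves care — and what I expect to be the main obstacle — is making precise the sense in which "$B_k$ is state dependent only": $\dot S$ itself obviously involves $\widetilde{\boldsymbol{u}}$, so the statement is really that the \emph{bilinear} expression $h\sum_i S^{(i-1)}\dot S\,S^{(k-i)}$, viewed as a map $\boldsymbol{u}\mapsto(\text{matrix})$, factors as $B_k$ (depending on state only) times $\widetilde{\boldsymbol{u}}$. I would therefore phrase the argument as: the map $\boldsymbol{u}\mapsto B_k\widetilde{\boldsymbol u}$ is, by the formulas above, linear in $\boldsymbol u$ once $\dot{\boldsymbol\omega},\dot{\boldsymbol v}$ are eliminated using \eqref{eqn:nonlinear_quad_dynamics}, and the matrix of this linear map — which is $B_k$ times the (constant) input-to-$\widetilde{\boldsymbol u}$ Jacobian — involves only $h,\boldsymbol\omega,\boldsymbol v$ and fixed model constants; dividing out the constant Jacobian leaves $B_k$ state-dependent. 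Everything else is routine bookkeeping of Kronecker-product/vectorization identities, which I would not spell out.
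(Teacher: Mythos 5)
Your argument is correct and rests on the same underlying observation as the paper's proof: the expressions defining $B_k\widetilde{\boldsymbol{u}}$ are linear in $\widetilde{\boldsymbol{u}}$ (equivalently in $\dbtilde{\boldsymbol{u}}$) for each fixed state, so the representing matrix involves only $h$, $\boldsymbol{\omega}$, $\boldsymbol{v}$ and model constants. Be aware, though, that the ``routine bookkeeping'' you decline to spell out \emph{is} the entire content of the paper's proof: it makes the factorization explicit via the anticommutation identity $\boldsymbol{a}^{\times}\boldsymbol{b}=-\boldsymbol{b}^{\times}\boldsymbol{a}$ (to move $({J}\dbtilde{\boldsymbol{u}})^{\times}$ out of the middle of the product ${\omega}^{\times(i-1)}({J}\dbtilde{\boldsymbol{u}})^{\times}{\omega}^{\times(k-i)}\boldsymbol{v}$) and via the vectorization/Kronecker identity together with a constant matrix $C_3$ satisfying $\texttt{vec}(\dbtilde{{u}}^{\times})=C_3\widetilde{\boldsymbol{u}}$, yielding closed-form expressions for $B_k$ in both index ranges; your higher-level linearity argument establishes existence of a state-dependent $B_k$ but does not produce these formulas, which the paper needs downstream (e.g.\ to compute $\mathcal{B}^{\dagger}\widetilde{\mathcal{B}}\boldsymbol{U}^{\star}$).
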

\begin{proof}
We have
\begin{align}
{\omega}^{\times {(i-1)}}({J}\dbtilde{\boldsymbol{u}})^\times {\omega}^{\times {\ell}}\boldsymbol{v}=(-1)^{\ell+1}{\omega}^{\times {(k-i)}}v^\times{\omega}^{\times {(i-1)}}{J}\dbtilde{\boldsymbol{u}}, \nonumber
\end{align}
where $\ell:=k-i$. Therefore,
\begin{align}
    B_k=\mathrm{bdiag}\big(\sum_{i=1}^{k-N_1}\small(-1\small)^{\ell-N_1+1}{\omega}^{\times {(\ell-N_1)}}v^\times{\omega}^{\times {(i-1)}}{J},0\big),\nonumber
\end{align}
for all $k\in[N_1+1,N']_d$. In addition,
\begin{align}
   & h S^{(i-1)}\dot{S} S^{(k-i)}\nonumber\\
     & =(-1)^{\ell}\left[\begin{array}{cc}R\omega^{\times(k-1)}{J}\dbtilde{{u}}^{\times} & -R\omega^{\times(\ell-1)}v^\times\omega^{\times(i-2)}{J}\dbtilde{\boldsymbol{u}}\\
    \mathbf{0} & 0\end{array}\right]\nonumber
\end{align}
 Let $C_{1,ik}=(-1)^{\ell}R\omega^{\times(k-1)}{J}$ and $C_{2,ik}=(-1)^{(\ell-1)}R\omega^{\times(\ell-1)}v^\times\omega^{\times(i-2)}{J}$.
Then,
\begin{align*}
 \texttt{vec}((-1)^{\ell}R\omega^{\times(k-1)}{J}\dbtilde{{u}}^{\times})=(\textbf{I}_3\otimes C_{1,ik})C_3\boldsymbol{\widetilde{u}},  
\end{align*}
where $C_3\in\mathbb{R}^{9\times 4}$ is a constant matrix such that $\texttt{vec}(\dbtilde{{u}}^\times)=C_3\widetilde{\boldsymbol{u}}$.
Therefore, $B_k\in\mathbb{R}^{16\times 4}$ is given by
\begin{align}
    B_k=\sum_{i=1}^{k}[(\textbf{I}_3\otimes C_{1,ik})C_3,\;\;[C_{2,ik}, \mathbf{0}_{3\times 1}]^\mathrm{T},\;\;\mathbf{0}_{4}]^\mathrm{T},\nonumber
\end{align}
for $ k\in[1,N_1]_d$. This completes the proof.
\end{proof}
\begin{remark}
Since $\mathcal{B}$ is a state-dependent matrix (from Proposition \ref{prop:B_is_state_dependent}), let us consider the following input transformation $\boldsymbol{U}^\star=\mathcal{B}\boldsymbol{\widetilde{u}}$. Then, Eqn. \eqref{eqn:lifted_space_dynamics} can be written as follows:
\begin{align}
    \dot{\boldsymbol{\mathcal{X}}}=\mathcal{A}\boldsymbol{\mathcal{X}}+\widetilde{\mathcal{B}}\boldsymbol{U}^\star,
    \label{eqn:lifted_space_dyn_final}
\end{align}
where $\widetilde{\mathcal{B}}= \mathrm{bdiag}(\mathbf{I}_4,\mathbf{0}_{16},\mathbf{I}_{16(N_1-1)},\mathbf{0}_{3},\mathbf{I}_{3(N_2-1)})$.
To realize the control input $\boldsymbol{\mathcal{U}}$ from $\boldsymbol{U}^\star$, one can solve the following least-squares optimization problem
\begin{align}\label{prob:least}
    \text{minimize:}\;\;(\mathcal{B} \boldsymbol{\widetilde{u}}-\widetilde{{\mathcal{B}}} \boldsymbol{U}^\star)^{\mathrm{T}}(\mathcal{B} \boldsymbol{\widetilde{u}}-\widetilde{{B}} \boldsymbol{U}^\star), 
\end{align}
whose solution is given by $
\boldsymbol{\widetilde{u}}=\mathcal{B}^{\dagger} \widetilde{\mathcal{B}} \boldsymbol{U}^\star$.
\end{remark}

\subsection{Point-wise Convergence}
To truncate the lifted space dynamics, we must first ensure that the terms which will be truncated are close to zero.
Let us define the following sets $\mathcal{D}_{\omega}$, and $\mathcal{D}_{v}$ as follows:
\begin{align}
    \mathcal{D}_{\omega}&=\{\boldsymbol{\omega}\in\mathbb{R}^3:|\boldsymbol{\omega}|\leq\Bar{\omega}\},\;\;\; 
    \mathcal{D}_{v}=\{\boldsymbol{v}\in\mathbb{R}^3:|\boldsymbol{v}|\leq\Bar{v}\},\nonumber
\end{align}
where $\Bar{\omega}<1/\sqrt{2}$ and $\Bar{v}<1$. {In addition, we will assume that $|\dbtilde{\boldsymbol{u}}|$ is upper bounded due to practical actuator constraints.}
\begin{theorem}
\textit{
{Let us assume that there exists $c>0$ such that $|J\dbtilde{\boldsymbol{u}}|\leq c$.} The sequences of functions $\underline{g}_k$, $\dot{\underline{g}}_k$, ${f}_k$ and $\dot{f}_k$ converge pointwise to $\mathbf{0}$ as $k$ tends to infinity, that is,
\begin{align}
     & \underset{k\to\infty}{\lim}\underline{g}_k=\mathbf{0},\;\; \underset{k\to\infty}{\lim}\dot{\underline{g}}_k=\mathbf{0}\;\;\forall\; h\in SE(3),\;\; S\in\mathcal{D}_{\omega}\times\mathcal{D}_{v}\nonumber\\
     & \underset{k\to\infty}{\lim}{f}_k=\mathbf{0},\quad\underset{k\to\infty}{\lim}\dot{f}_k=\mathbf{0}\quad\forall \;\; \boldsymbol{\omega}\in\mathcal{D}_{\omega},\;\; \boldsymbol{v}\in\mathcal{D}_v\nonumber
\end{align}}
\label{prop:f_k_and_g_k_equals_0}
\end{theorem}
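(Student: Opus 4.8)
The plan is to majorize the Euclidean (resp.\ Frobenius) norm of each of the four sequences by a quantity that decays geometrically in $k$, the decay being governed by the bound $|\boldsymbol\omega|\le\bar\omega<1/\sqrt2<1$ that holds on $\mathcal D_\omega$.

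First I would record a few elementary facts. For any $\boldsymbol a\in\mathbb R^3$ one has $\|\boldsymbol a^\times\|_F=\sqrt2\,|\boldsymbol a|$, and since $\boldsymbol a^\times$ is skew-symmetric (hence normal) $\|\boldsymbol a^\times\|_2=|\boldsymbol a|$; consequently $\|(\omega^\times)^k\|_2\le|\boldsymbol\omega|^k$, and because $(\omega^\times)^3=-|\boldsymbol\omega|^2\omega^\times$ every power $(\omega^\times)^k$ has rank at most $2$, so $\|(\omega^\times)^k\|_F\le\sqrt2\,|\boldsymbol\omega|^k$. A one-line induction gives, for $k\ge1$,
\[
S^k=\begin{bmatrix}(\omega^\times)^k & (\omega^\times)^{k-1}\boldsymbol v\\ \mathbf 0 & 0\end{bmatrix},\qquad
g_k=hS^k=\begin{bmatrix}R(\omega^\times)^k & R(\omega^\times)^{k-1}\boldsymbol v\\ \mathbf 0 & 0\end{bmatrix},
\]
so the (unbounded) translation $\boldsymbol p$ of $h$ never enters $g_k$; moreover $\|hM\|_F=\|M\|_F$ for every matrix $M$ with a zero bottom row (orthogonality of $R$), and each $S^{i-1}\dot S S^{k-i}$ has a zero bottom row.

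Next I would dispatch $f_k$ and $\underline g_k$. From $f_k=(\omega^\times)^k\boldsymbol v$ we get $|f_k|\le|\boldsymbol\omega|^k|\boldsymbol v|\le\bar\omega^k\bar v\to0$; from the block form, $\|\underline g_k\|^2=\|(\omega^\times)^k\|_F^2+|(\omega^\times)^{k-1}\boldsymbol v|^2\le\bar\omega^{2k-2}(2\bar\omega^2+\bar v^2)\to0$. For the derivatives I would use the identities \eqref{eqn:f_k_position} and \eqref{eqn:g_k_attitude}. For $\dot f_k=-f_{k+1}+\sum_{i=1}^k(\omega^\times)^{i-1}(J\dbtilde{\boldsymbol u})^\times(\omega^\times)^{k-i}\boldsymbol v$, the triangle inequality and $\|(J\dbtilde{\boldsymbol u})^\times\|_2=|J\dbtilde{\boldsymbol u}|\le c$ give $|\dot f_k|\le\bar\omega^{k+1}\bar v+k\,c\,\bar v\,\bar\omega^{k-1}$, which $\to0$ because $k r^k\to0$ for every $r\in[0,1)$. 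For $\dot g_k=g_{k+1}+h\sum_{i=1}^k S^{i-1}\dot S S^{k-i}$, the first term vanishes in the limit by the previous step, while the isometry remark reduces the second to $\sum_{i=1}^k\|S^{i-1}\|_2\,\|\dot S\|_F\,\|S^{k-i}\|_2$; using $\|S^j\|_2\le|\boldsymbol\omega|^{j-1}\sqrt{|\boldsymbol\omega|^2+|\boldsymbol v|^2}$ for $j\ge1$ and $\|S^0\|_2=1$, this sum is $O(k\,\bar\omega^{\,k-3})$, provided $\|\dot S\|_F$ stays bounded --- which it does, since $\dot{\boldsymbol\omega}$ is bounded by the standing assumptions and $\dot{\boldsymbol v}$ is bounded on $\mathcal D_\omega\times\mathcal D_v$ (the force term being bounded by the actuator limits). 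Hence $\dot{\underline g}_k\to\mathbf 0$.

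The crux is the $\dot g_k$ estimate: one has to (i) exploit the upper-triangular block structure of $S$ and the orthogonality of $R$ so that the unbounded translational part of $h$ drops out, (ii) bound $\|S^j\|_2$ by a quantity that decays like $|\boldsymbol\omega|^j$ rather than the useless submultiplicative bound $\|S\|_2^j$ (note $\|S\|_2$ need not be below $1$), and (iii) absorb the $k$ summands via $k r^k\to0$. Everything else is a routine geometric-series estimate, and the bound $\bar\omega<1/\sqrt2$ is comfortably more than is needed (it even makes $\|(\omega^\times)^k\|_F<1$ for every $k$).
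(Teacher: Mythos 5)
Your proof is correct, and at the top level it runs along the same lines as the paper's: exhibit geometric decay in $k$ for each sequence and absorb the $k$ summands in \eqref{eqn:f_k_position} and \eqref{eqn:g_k_attitude} via $k r^k\to 0$. Your treatment of $f_k$, $\dot f_k$ and $\underline g_k$ coincides with the paper's almost verbatim (block form of $g_k$, orthogonality of $R$ removing both $R$ and the unbounded translation $\boldsymbol p$). Where you genuinely diverge is in the two norm estimates. For $\|(\omega^{\times})^k\|_F$ the paper uses Frobenius submultiplicativity to get $(\sqrt 2)^k|\boldsymbol\omega|^k$ in \eqref{eqn:omega_bound}, which is precisely why it needs $\bar\omega<1/\sqrt 2$; your rank-two/normality bound $\sqrt2\,|\boldsymbol\omega|^k$ shows $\bar\omega<1$ would already suffice. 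More importantly, for $\dot{\underline g}_k$ the paper asserts $\|S^{i-1}\dot S S^{k-i}\|_F\le|\underline{\dot S}|\,|\underline{S}^{(k-1)}|$, which does not follow from submultiplicativity because $\dot S$ sits between the two powers of $S$ (and the naive bound $\|\dot S\|_F\|S\|_F^{k-1}$ need not decay, since $\|S\|_F^2=2|\boldsymbol\omega|^2+|\boldsymbol v|^2$ can exceed $1$ under the stated hypotheses). Your mixed estimate $\|S^{i-1}\dot S S^{k-i}\|_F\le\|S^{i-1}\|_2\,\|\dot S\|_F\,\|S^{k-i}\|_2$ combined with the factorization $\|S^j\|_2\le|\boldsymbol\omega|^{j-1}\sqrt{|\boldsymbol\omega|^2+|\boldsymbol v|^2}$ supplies the decaying bound that the paper's argument actually needs, so your route is the more rigorous one at this step. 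The one thing you should state explicitly rather than parenthetically is the boundedness of $\|\dot S\|_F$ (equivalently of $\dot{\boldsymbol\omega}$ and $\dot{\boldsymbol v}$), which both you and the paper rely on and which follows from the stated actuator bound on $|\dbtilde{\boldsymbol{u}}|$ together with $(\boldsymbol\omega,\boldsymbol v)\in\mathcal{D}_{\omega}\times\mathcal{D}_{v}$.
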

\begin{proof}
Since ${\omega}^{\times k}\boldsymbol{v}= \underbrace{\boldsymbol{\omega}\times\ldots \times\boldsymbol{\omega}}_{k \text{ times }}\times\boldsymbol{v}$, we have
\begin{align}
|{\omega}^{\times k}\boldsymbol{v}| \leq |\boldsymbol{\omega}|^k|\boldsymbol{v}|.
    \label{eqn:omega_k_less_than}
\end{align}
Because $|\boldsymbol{\omega}|<1/\sqrt{2}$, $|\boldsymbol{v}|<1$ and $ f_k={\omega}^{\times k}\boldsymbol{v}$, we can conclude that $\underset{k\to\infty}{\lim}|f_k|=0$, which implies $\underset{k\to\infty}{\lim}f_k=\mathbf{0}.$
From Eqn. \eqref{eqn:f_k_position}, $\dot{f}_k$ can be written as follows:
\begin{align}
{\dot{f}}_k=-{f}_{k+1}+\sum_{i=1}^{k}{\omega}^{\times {(i-1)}}({J}\dbtilde{\boldsymbol{u}})^\times {\omega}^{\times {(k-i)}}\boldsymbol{v}.
\label{eqn:f_k_limit}
\end{align}
Now,
\begin{align}
    |{\omega}^{\times {(i-1)}}({J}\dbtilde{\boldsymbol{u}})^\times {\omega}^{\times {(k-i)}}\boldsymbol{v}|\leq |\boldsymbol{\omega}|^{(k-1)}|{J}\dbtilde{\boldsymbol{u}}||\boldsymbol{v}|.
    \label{eqn:omega_x_j_omega_x_bound}
\end{align}
 In view of \eqref{eqn:f_k_limit}, \eqref{eqn:omega_x_j_omega_x_bound} implies
\begin{align}
|\sum_{i=1}^{k}{\omega}^{\times {(i-1)}}({J}\dbtilde{\boldsymbol{u}})^\times {\omega}^{\times {(k-i)}}\boldsymbol{v}|\leq
k |\boldsymbol{\omega}|^{(k-1)}{c}|\boldsymbol{v}|.
\label{eqn:vec_less_than_k_omega}
\end{align}
Because $|\boldsymbol{\omega}|<1/\sqrt{2}$ and $\underset{k\to\infty}{\lim}(k|\boldsymbol{\omega}|^k)=0$ for all $\boldsymbol{\omega}\in\mathcal{D}_{\omega}$, taking limits on both sides of \eqref{eqn:vec_less_than_k_omega} gives
\begin{align}
    \underset{k\to\infty}{\lim}\big|\sum_{i=1}^{k}{\omega}^{\times {(i-1)}}({J}\dbtilde{\boldsymbol{u}})^\times {\omega}^{\times {(k-i)}}\boldsymbol{v}\big|=0.
    \label{eqn:summation_omega_times_limit}
\end{align}
In view of Eqn. \eqref{eqn:f_k_position}, we have
\begin{align}
    |{\dot{f}}_k|\leq|{f}_{k+1}|+\big|\sum_{i=1}^{k}{\omega}^{\times {(i-1)}}({J}\dbtilde{\boldsymbol{u}})^\times{\omega}^{\times {(k-i)}}\boldsymbol{v}\big|.
    \label{eqn:f_k_lessthan_f_k_plus_1}
\end{align}
Using Eqns. \eqref{eqn:f_k_limit}, \eqref{eqn:summation_omega_times_limit} and taking limits on both sides of $\eqref{eqn:f_k_lessthan_f_k_plus_1}$ gives $\underset{k\to\infty}{\lim} |{\dot{f}}_k|\leq0.$
Therefore, $\underset{k\to\infty}{\lim} |\dot{f}_k|=0$, which implies $\underset{k\to\infty}{\lim} \dot{f}_k=\mathbf{0}$.

The expression for $g_k$ can be written as follows:
\begin{align}
    g_k=\left[\begin{smallmatrix}
          R{\omega}^{\times k} & R{\omega}^{\times (k-1)}\boldsymbol{v} \\
          \mathbf{0} & 0
    \end{smallmatrix}\right]=\left[\begin{smallmatrix}
          R{\omega}^{\times k} & Rf_{k-1} \\
          \mathbf{0} & 0
    \end{smallmatrix}\right].
    \label{eqn:g_k_matrix_expression}
\end{align}
Since the rotation matrix $R$ is orthogonal, we have
\begin{align}
    \|R\omega^{\times k}\|_F&=|\texttt{vec}(R\omega^{\times k})|=\sqrt{\texttt{tr}((R\omega^{\times k})^\mathrm{T} R\omega^{\times k})}\nonumber\\
    &=\sqrt{\texttt{tr}(\omega^{\times k\mathrm{T}}R^\mathrm{T} R\omega^{\times k})}=\|\omega^{\times k}\|_F\nonumber\\
    &=\|\omega^{\times k}\|_F=|\texttt{vec}(\omega^{\times k})|.
    \label{eqn:R_omega_equals_omega}
\end{align}
Because the Frobenius norm is a submultiplicative norm, 
\begin{align}
\|\omega^{\times k}\|_F\leq\|\omega^\times\|_F^k=|\texttt{vec}(\omega^\times)|^k=\sqrt{2}^k|\boldsymbol{\omega}|^k .
\label{eqn:omega_bound}
\end{align}
Using \eqref{eqn:R_omega_equals_omega} and \eqref{eqn:omega_bound}, it follows that
\begin{align}
   |\texttt{vec}(R\omega^{\times k})|\leq|\texttt{vec}(\omega^\times)|^k= \sqrt{2}^k |\boldsymbol{\omega}|^k.
   \label{eqn:R_omegax_bound}
\end{align}
Since $|\boldsymbol{\omega}|<1/\sqrt{2}$, taking limits on both sides of \eqref{eqn:R_omegax_bound} gives
\begin{align}
    & \underset{k\to\infty}{\lim}|\texttt{vec}(R{\omega}^{\times k})|=\underset{k\to\infty}{\lim}|\texttt{vec}({\omega}^{\times k})|=0.
    \label{eqn:limit_r_omega_x}
\end{align}
From \eqref{eqn:omega_k_less_than} we have, $|Rf_{k-1}|=|f_{k-1}|\leq|\boldsymbol{\omega}|^{k-1}|\boldsymbol{v}|.$
Therefore, $\underset{k\to\infty}{\lim}|\texttt{vec}(R{f}_{k-1})|=0$ and we can conclude that $\underset{k\to\infty}{\lim}|\underline{S}^k|=0$ (using \eqref{eqn:g_k_matrix_expression} and \eqref{eqn:limit_r_omega_x}) which implies
    $\underset{k\to\infty}{\lim}|\underline{g}_k|=0$ and thus $ \underset{k\to\infty}{\lim}\underline{g}_k=\mathbf{0}$.
Now consider ${\dot{\underline{g}}_k}$ (from Eqn. \eqref{eqn:g_k_attitude_underline}) which is given as follows:
\begin{align}
    \underline{\dot{g}}_{k}= \underline{g}_{k+1}+ \texttt{vec}\big(h\sum_{i=1}^{k} S^{(i-1)}\dot{S} S^{(k-i)}\big).
    \label{eqn:g_k_dot_equation}
\end{align}
Now since 
$\underset{k\to\infty}{\lim}|\underline{S}^k|=0$, it follows that
\begin{align}
   \|S^{(i-1)}\dot{S} S^{(k-i)}\|_F=|\texttt{vec}(S^{(i-1)}\dot{S} S^{(k-i)})|\leq |\underline{\dot{S}}| |\underline{S}^{(k-1)}|.\nonumber
\end{align}
Taking limits on both sides, we have
\begin{align}
    \underset{k\to\infty}{\lim}|\texttt{vec}(\sum_{i=1}^{k} S^{(i-1)}\dot{S} S^{(k-i)})|\leq  \underset{k\to\infty}{\lim} k |\underline{\dot{S}}| |\underline{S}^{(k-1)}|=0.\nonumber
\end{align}
Hence, from Eqn. \eqref{eqn:g_k_dot_equation} we get
\begin{align}
   \underset{k\to\infty}{\lim}|\dot{\underline{g}}_k|&\leq  \underset{k\to\infty}{\lim}|\underline{g}_{k+1}|+ \underset{k\to\infty}{\lim}|\texttt{vec}(\sum_{i=1}^{k} S^{(i-1)}\dot{S} S^{(k-i)})| \nonumber\\
   & \leq0.\nonumber
\end{align}
Thus, $\underset{k\to\infty}{\lim}|\dot{\underline{g}}_k|=0$ which implies $\underset{k\to\infty}{\lim}\dot{\underline{g}}_k=\mathbf{0}$ and the theorem is proved.
\end{proof}

\begin{remark}
Theorem \ref{prop:f_k_and_g_k_hat} allows us to truncate the proposed sequence of observables that span the lifted space to obtain a lower (finite) dimensional linear state space model with lifted state $\boldsymbol{\omega}\in\mathcal{D}_{{\omega}}$ and $\boldsymbol{v}\in\mathcal{D}_{{v}}$ for higher values of $N$. However this is not applicable for all $\boldsymbol{\omega}$ and $\boldsymbol{v}$.
\end{remark}

\subsection{Point-wise convergence for constrained case}
For most practical applications, the magnitude of the angular and linear velocities are constrained due to actuation limitations. In other words, there exists some constants ${\omega}_0$ and $v_0$ such that
\begin{align}
 {\omega}_0>\sqrt{2}\underset{\boldsymbol{\omega}}{\texttt{max}}(|\boldsymbol{\omega}|),\;\quad  {v}_0>\underset{\boldsymbol{v}}{\texttt{max}}(|\boldsymbol{v}|). \nonumber
\end{align}
Therefore, for higher $|\boldsymbol{\omega}|$ and $|\boldsymbol{v}|$, these two terms can be normalized so that the truncation can be made possible as illustrated later. We now define $\widehat{\boldsymbol{\omega}}:=\boldsymbol{\omega}/\omega_0$ and $\widehat{\boldsymbol{v}}:=\boldsymbol{v}/v_0$. New observables $\widehat{g}_k$ and $\widehat{f}_k$ are then defined as follows:
\begin{align}
   & \widehat{g}_k= hS^k/s_0^k=h\widehat{S}^k,\;\;\widehat{f}_k={\omega}^{\times k}\boldsymbol{v}/{\omega}_0^k v_0=\widehat{{\omega}}^{\times k}\widehat{\boldsymbol{v}},
   \label{eqn:f_k_hat}
\end{align}
where $s_0=\texttt{max}\{{\omega}_0,v_0\}$. Next, we prove that both $\widehat{g}_k$ and $\widehat{f}_k$ tend to zero for all $(\boldsymbol{\omega},\boldsymbol{v})\in\mathcal{D}_{\widehat{\omega}}\times \mathcal{D}_{\widehat{v}}$ as $k\rightarrow\infty$ where
\begin{align}
    \mathcal{D}_{\widehat{\omega}}&=\{\boldsymbol{\widehat{\omega}}\in\mathbb{R}^3:|\boldsymbol{\widehat{\omega}}|\leq\Bar{{\omega}}\},\;\;
    \mathcal{D}_{\widehat{v}}=\{\boldsymbol{\widehat{v}}\in\mathbb{R}^3:|\boldsymbol{\widehat{v}}|\leq\Bar{{v}}\},\nonumber
\end{align}
where $\Bar{\omega}<1/\sqrt{2}$ and $\Bar{v}<1$.

\begin{theorem}
\textit{For any $\boldsymbol{\omega}\in\mathcal{D}_{\widehat{\omega}}$ and $\boldsymbol{v}\in\mathcal{D}_{\widehat{v}}$, the sequences of functions $\widehat{\underline{g}}_k$, $\dot{\widehat{\underline{g}}}_k$, $\widehat{f}_k$ and $\dot{\widehat{f}}_k$ converge pointwise to $\mathbf{0}$, i.e,
\begin{align}
    &\underset{k\to\infty}{\lim}\underline{\widehat{g}}_k=\mathbf{0},\;\; \underset{k\to\infty}{\lim}\underline{\dot{\widehat{g}}}_k=\mathbf{0}\;\;\forall\;h\in SE(3), S\in\mathcal{D}_{\widehat{\omega}}\times \mathcal{D}_{\widehat{v}}\nonumber \\
   &\underset{k\to\infty}{\lim}{\widehat{f}}_k=\mathbf{0},\quad  \underset{k\to\infty}{\lim}{\dot{\widehat{f}}}_k=\mathbf{0}\quad\forall\;\boldsymbol{\omega}\in\mathcal{D}_{\widehat{\omega}},\;\; \boldsymbol{v}\in\mathcal{D}_{\widehat{v}}\nonumber
\end{align}}
\label{prop:f_k_and_g_k_hat}
\end{theorem}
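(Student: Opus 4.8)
The plan is to derive this statement from Theorem~\ref{prop:f_k_and_g_k_equals_0} by a rescaling argument, since the normalized observables \eqref{eqn:f_k_hat} are, up to constant factors, the observables $g_k,f_k$ of \eqref{eq:basisfun} evaluated at rescaled arguments that now lie \emph{inside} the convergence domains $\mathcal{D}_{\omega}\times\mathcal{D}_{v}$. First I would record the inequalities that make this work. Since $s_0=\texttt{max}\{\omega_0,v_0\}\geq\omega_0$ and, for $\boldsymbol{\omega}\in\mathcal{D}_{\widehat{\omega}}$, $|\boldsymbol{\omega}|\leq\Bar{\omega}\,\omega_0$, the angular block of $\widehat{S}=S/s_0$ obeys $|\boldsymbol{\omega}/s_0|\leq\Bar{\omega}<1/\sqrt{2}$; likewise $|\boldsymbol{v}/s_0|\leq\Bar{v}<1$ (using $s_0\geq v_0$), and $|\widehat{\boldsymbol{\omega}}|\leq\Bar{\omega}<1/\sqrt{2}$, $|\widehat{\boldsymbol{v}}|\leq\Bar{v}<1$. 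Thus $\widehat{S}$ is an ``$S$-type'' matrix meeting the hypotheses of Theorem~\ref{prop:f_k_and_g_k_equals_0}, and $\widehat{f}_k=\widehat{{\omega}}^{\times k}\widehat{\boldsymbol{v}}$ is precisely the observable $f_k$ evaluated at $(\widehat{\boldsymbol{\omega}},\widehat{\boldsymbol{v}})\in\mathcal{D}_{\omega}\times\mathcal{D}_{v}$.

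Next I would establish convergence of $\widehat{f}_k$ and $\underline{\widehat{g}}_k$ themselves. For the velocity observables, $|\widehat{f}_k|\leq|\widehat{\boldsymbol{\omega}}|^{k}|\widehat{\boldsymbol{v}}|\leq\Bar{\omega}^{k}\Bar{v}\to 0$, exactly as in \eqref{eqn:omega_k_less_than}. For the pose observables, writing $\widehat{g}_k=h\widehat{S}^{k}$ in the block form of \eqref{eqn:g_k_matrix_expression} (with $\boldsymbol{\omega},\boldsymbol{v}$ replaced by $\boldsymbol{\omega}/s_0,\boldsymbol{v}/s_0$), the orthogonality of $R$ together with submultiplicativity of the Frobenius norm---the chain \eqref{eqn:R_omega_equals_omega}--\eqref{eqn:R_omegax_bound}---bounds the two nonzero blocks of $\widehat{g}_k$ by $(\sqrt{2}\,\Bar{\omega})^{k}$ and $\Bar{\omega}^{k-1}\Bar{v}$ respectively; since $\sqrt{2}\,\Bar{\omega}<1$ both bounds vanish, so $\underline{\widehat{g}}_k\to\mathbf{0}$.

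Finally, for the derivatives I would use that $\omega_0,v_0,s_0$ are constants, so $\dot{\widehat{f}}_k=\dot{f}_k/(\omega_0^{k}v_0)$ and $\underline{\dot{\widehat{g}}}_k=\underline{\dot{g}}_k/s_0^{k}$. Substituting \eqref{eqn:f_k_position} and \eqref{eqn:g_k_attitude_underline}, the ``shift'' terms become $-\omega_0\,\widehat{f}_{k+1}$ and $s_0\,\underline{\widehat{g}}_{k+1}$, which tend to $\mathbf{0}$ by the previous step; the ``forcing'' terms are handled as in the proof of Theorem~\ref{prop:f_k_and_g_k_equals_0}: with $|J\dbtilde{\boldsymbol{u}}|\leq c$ one bounds the $\widehat{f}_k$ forcing term by $\tfrac{k c\,\Bar{v}}{\omega_0}\Bar{\omega}^{\,k-1}$, and---noting in addition that left multiplication by $h$ preserves the Frobenius norm of a matrix with zero bottom row, so the factor $h$ in \eqref{eqn:g_k_attitude_underline} is immaterial---one bounds the $\underline{\widehat{g}}_k$ forcing term by $\tfrac{k}{s_0}|\underline{\dot{S}}|\,\big((\sqrt{2}\,\Bar{\omega})^{k-1}+\Bar{\omega}^{\,k-2}\Bar{v}\big)$. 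Both expressions go to $0$ because $k\rho^{k-1}\to 0$ for every $\rho<1$, which gives $\dot{\widehat{f}}_k\to\mathbf{0}$ and $\underline{\dot{\widehat{g}}}_k\to\mathbf{0}$ and finishes the proof. The argument is essentially bookkeeping; the only points that need care are keeping the two normalizations separate (the $g_k$ share the common factor $s_0$, whereas the $f_k$ use $\omega_0$ and $v_0$ individually) and checking that $s_0\geq\omega_0,v_0$ together with $\Bar{\omega}<1/\sqrt{2}$, $\Bar{v}<1$ pushes every geometric ratio strictly below $1$---the extra constant prefactors $\omega_0,s_0,|\underline{\dot{S}}|$ produced by differentiation being harmless for a pointwise (fixed-argument) limit.
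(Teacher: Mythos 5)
Your proof is correct and takes essentially the same route as the paper: both reduce the claim to Theorem~\ref{prop:f_k_and_g_k_equals_0} by observing that the normalized quantities $\widehat{f}_k$, $\widehat{g}_k$ satisfy its hypotheses, with the derivatives again having the shift-plus-forcing structure of \eqref{eqn:f_k_position} and \eqref{eqn:g_k_attitude_underline} up to harmless constant factors ($\omega_0$, $s_0$). You are merely more explicit than the paper at a few points it elides --- checking $s_0\geq\omega_0,v_0$ so that the blocks of $\widehat{S}$ land in $\mathcal{D}_{\omega}\times\mathcal{D}_{v}$, and noting that left multiplication by $h$ preserves the Frobenius norm of the zero-bottom-row forcing matrix.
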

\begin{proof}
Since $|\widehat{\boldsymbol\omega}|<1/\sqrt{2}$ and $|\widehat{\boldsymbol{v}}|<1$, using Theorem \ref{prop:f_k_and_g_k_equals_0}, we conclude that
    $\underset{k\to\infty}{\lim}|\underline{\widehat{g}}_k|=0$, and $\underset{k\to\infty}{\lim}|{\widehat{f}}_k|=0$.
Now $\dot{\widehat{{g}}}_k$ and $\dot{\widehat{f}}_k$ can be written as
\begin{align}
     & \dot{\widehat{f}}_k=-\omega_0\widehat{{f}}_{k+1}+\sum_{i=1}^{k}{\widehat{{\omega}}^{\times {(i-1)}}({J}\dbtilde{\boldsymbol{u}})^\times \widehat{{\omega}}^{\times {(k-i)}} }\widehat{\boldsymbol{v}},\nonumber\\
   & \dot{\widehat{{g}}}_k= s_0\widehat{g}_{k+1}+h \sum_{i=1}^{k} \widehat{S}^{(i-1)}\widehat{\dot{S}} \widehat{S}^{(k-i)} .\nonumber
\end{align}
Since $|\widehat{\boldsymbol{\omega}}|<1/\sqrt{2}$ and $|\widehat{\boldsymbol{v}}|<1$, using Theorem \ref{prop:f_k_and_g_k_equals_0}, we can easily conclude that $ \underset{k\to\infty}{\lim}|\underline{\dot{\widehat{g}}}_k|=0$ and $  \underset{k\to\infty}{\lim}|{\dot{\widehat{f}}}_k|=0$.
Hence the proof is complete.
\end{proof}
Therefore, we obtain the following new lifted state which will be used instead of that given in Eqn. \eqref{eqn:observables}:
\begin{align}
   & \boldsymbol{\mathcal{X}}=[ \boldsymbol{\omega}^\mathrm{T},(\boldsymbol{v})_3,\widehat{\underline{g}}_{0}^\mathrm{T}, \cdots, \widehat{\underline{g}}_{{N_1-1}}^{\mathrm{T}},\widehat{{f}}_0^\mathrm{T},\dots,\widehat{{f}}_{N_2-1}^\mathrm{T} ]^\mathrm{T}
\end{align}
and $\mathcal{A}=\mathrm{bdiag}(\widehat{A}_1,\widehat{A}_2,\widehat{A}_3)$, $\mathcal{B}=[B,\mathbf{0}_{16\times 4},\widehat{B}_1,\;\dots,\widehat{B}_{N'}]^\mathrm{T}$,
where 
\begin{align}
    & \widehat{A}_1=A_1,\quad \widehat{A}_2=\omega_0A_2,\quad \widehat{A}_3=s_0A_3,\nonumber\\
    & \widehat{B}_k\widetilde{\boldsymbol{{u}}}=\left\{ \begin{array}{l}
\texttt{vec}(h \sum_{i=1}^{k} \widehat{S}^{(i-1)}\dot{\widehat{S}} \widehat{S}^{(k-i)}),\;\; k\in[1,N_1]_d
\\
  \sum\limits_{i=1}^{k-N_1}{\widehat{\omega}}^{\times {(i-1)}}({J}\dbtilde{\boldsymbol{u}})^\times {\widehat{\omega}}^{\times {(k-i)}}\boldsymbol{\widehat{v}},
    k\in[N_1+1,N']_d
\end{array}
\right.\nonumber
\end{align}
{Following a similar procedure as in Theorem \ref{prop:B_is_state_dependent}, it can be shown that $\widehat{B}_k$ is only state dependent.}
\begin{theorem}
\textit{For any $\boldsymbol{\omega}\in\mathcal{D}_{\widehat{\omega}}$ and $\boldsymbol{v}\in\mathcal{D}_{\widehat{v}}$, the following inequalities holds,}
\begin{align}
  |\widehat{f}_{k+1}|<|\widehat{f}_k|,\quad  |\underline{\widehat{g}}_{k+1}|<|\underline{\widehat{g}}_k|,\;\;\forall\;\; k\in\mathbb{N}.\nonumber
\end{align}
\label{prop:f_k_1_lessthan_f_k}
\end{theorem}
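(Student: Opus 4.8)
The plan is to prove the two monotonicity statements by separate, elementary one-step estimates, using the closed-form expressions for $\widehat{f}_k$ and $\widehat{g}_k$ together with the fact that $R$ is an isometry, rather than any crude global contraction bound.

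\emph{Velocity observables.} From $\widehat{f}_k=\widehat{\omega}^{\times k}\widehat{\boldsymbol v}$ in \eqref{eqn:f_k_hat} we have $\widehat{f}_{k+1}=\widehat{\omega}^{\times}\widehat{f}_k=\widehat{\boldsymbol\omega}\times\widehat{f}_k$, so the cross-product inequality gives $|\widehat{f}_{k+1}|\le|\widehat{\boldsymbol\omega}|\,|\widehat{f}_k|$. Since $|\widehat{\boldsymbol\omega}|\le\bar{\omega}<1/\sqrt{2}<1$ on $\mathcal{D}_{\widehat{\omega}}$, this yields $|\widehat{f}_{k+1}|<|\widehat{f}_k|$ whenever $\widehat{f}_k\neq\mathbf 0$ (and both sides vanish otherwise). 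This is the easy half.

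\emph{Pose observables.} For $k\ge 1$ I would use the block form of $\widehat{g}_k$, derived just as in \eqref{eqn:g_k_matrix_expression}, namely $\widehat{g}_k=s_0^{-k}\left[\begin{smallmatrix}R\omega^{\times k} & Rf_{k-1}\\ \mathbf 0 & 0\end{smallmatrix}\right]$ with $f_{k-1}=\omega^{\times(k-1)}\boldsymbol v$. Because $R$ is orthogonal, the calculation in \eqref{eqn:R_omega_equals_omega} gives $|\underline{\widehat g}_k|^2=\|\widehat g_k\|_F^2=s_0^{-2k}\big(\|\omega^{\times k}\|_F^2+|f_{k-1}|^2\big)=:s_0^{-2k}a_k$. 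The key step is a one-step recursion for $a_k$: submultiplicativity of $\|\cdot\|_F$ with \eqref{eqn:omega_bound} gives $\|\omega^{\times(k+1)}\|_F\le\sqrt{2}\,|\boldsymbol\omega|\,\|\omega^{\times k}\|_F$, and $f_k=\omega^{\times}f_{k-1}$ gives $|f_k|\le|\boldsymbol\omega|\,|f_{k-1}|$, so $a_{k+1}\le 2|\boldsymbol\omega|^2 a_k$. Hence $\|\widehat g_{k+1}\|_F^2\le\big(2|\boldsymbol\omega|^2/s_0^2\big)\|\widehat g_k\|_F^2$, and since $|\boldsymbol\omega|=\omega_0|\widehat{\boldsymbol\omega}|$ with $\omega_0\le s_0$, we get $2|\boldsymbol\omega|^2/s_0^2\le 2|\widehat{\boldsymbol\omega}|^2\le 2\bar{\omega}^2<1$, which yields $|\underline{\widehat g}_{k+1}|<|\underline{\widehat g}_k|$ for all $k\ge 1$. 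If $\mathbb N$ is taken to include $0$, the case $k=0$ is checked directly: $\|\widehat g_1\|_F=s_0^{-1}\sqrt{2|\boldsymbol\omega|^2+|\boldsymbol v|^2}\le\sqrt{2\bar{\omega}^2+\bar{v}^2}<\sqrt 2$, while $\|\widehat g_0\|_F=\|h\|_F=\sqrt{4+|\boldsymbol p|^2}\ge 2$, so $|\underline{\widehat g}_1|<|\underline{\widehat g}_0|$ as well.

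The point that needs care — not a deep obstacle, but the step one must get right — is the pose estimate. A naive submultiplicative bound $\|\widehat g_{k+1}\|_F\le\|\widehat g_k\|_F\,\|\widehat S\|_F$ only supplies the factor $\|\widehat S\|_F=s_0^{-1}\sqrt{2|\boldsymbol\omega|^2+|\boldsymbol v|^2}\le\sqrt{2\bar\omega^2+\bar v^2}$, which need not be less than $1$; one therefore has to exploit the block structure of $\widehat g_k$ and the isometry of $R$ to split off the pieces $\omega^{\times k}$ and $f_{k-1}$ and pull out the genuine contraction factor $|\boldsymbol\omega|/s_0$. The only other things to keep in mind are the separate base case $k=0$ (since $g_0=h$ is not of the block form valid for $k\ge1$) and that the strict inequalities are understood to hold provided the relevant observable is not identically zero.
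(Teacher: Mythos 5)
Your proof is correct and follows essentially the same route as the paper's: the cross-product bound $|\widehat{f}_{k+1}|\le|\widehat{\boldsymbol\omega}|\,|\widehat{f}_k|$ for the velocity observables, and for the pose observables the block decomposition of $\widehat{g}_k$ together with the orthogonality of $R$ and submultiplicativity of the Frobenius norm, yielding a contraction factor $\sqrt{2}\,|\widehat{\boldsymbol\omega}|<1$. Your version is in fact slightly more careful than the paper's on three points the paper glosses over: the separate base case $k=0$ (where $\widehat g_0=h$ does not have the block form used for $k\ge1$), the degenerate case $\widehat f_k=\mathbf 0$ (the paper divides by $|\widehat f_k|$ without comment), and the bookkeeping distinguishing $s_0$ from $\omega_0$ and $v_0$ in the normalization of $\widehat g_k$.
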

\begin{proof}
Using Eqn. \eqref{eqn:f_k_hat}, we have
\begin{align}
\frac{|\widehat{f}_{k+1}|}{|\widehat{f}_k|}= \frac{|\widehat{\omega}^{\times(k+1)}\widehat{\boldsymbol{v}}|}{|\widehat{\omega}^{\times k}\widehat{\boldsymbol{v}}|}\leq\frac{|\widehat{\boldsymbol{\omega}}||\widehat{\omega}^{\times k}\widehat{\boldsymbol{v}}|}{|\widehat{\omega}^{\times k}\widehat{\boldsymbol{v}}|}=|\widehat{\boldsymbol{\omega}}|<1/\sqrt{2}.\nonumber
\end{align}
Therefore, $ |\widehat{f}_{k+1}|<|\widehat{f}_k|$.
In addition,
\begin{align}
      \widehat{g}_k=h\widehat{S}^k=\left[\begin{smallmatrix}
          R\widehat{\omega}^{\times k} & R\widehat{f}_{k-1} \\
          \mathbf{0} & 0
    \end{smallmatrix}\right].
    \label{eqn:g_k_hat_matrix}
\end{align}
In addition, 
\begin{align}
  \frac{ |\texttt{vec}(R\widehat{\omega}^{\times (k+1)})|}{|\texttt{vec}(R\widehat{\omega}^{\times k})|}&=\frac{|\texttt{vec}(\widehat{\omega}^{\times (k+1)})|}{|\texttt{vec}(\widehat{\omega}^{\times k})|}\leq|\sqrt{2}\widehat{\boldsymbol{\omega}}|<1.\nonumber
\end{align}
Therefore, $|\texttt{vec}(R\widehat{\omega}^{\times (k+1)})|<|\texttt{vec}(R\widehat{\omega}^{\times (k)})|$. Since $ |\widehat{f}_{k}|<|\widehat{f}_{k-1}|$, by using \eqref{eqn:g_k_hat_matrix}, we conclude that  $|\underline{\widehat{g}}_{k+1}|<|\underline{\widehat{g}}_k|$.
Hence the theorem is proved.
\end{proof}
\begin{figure*}[ht]
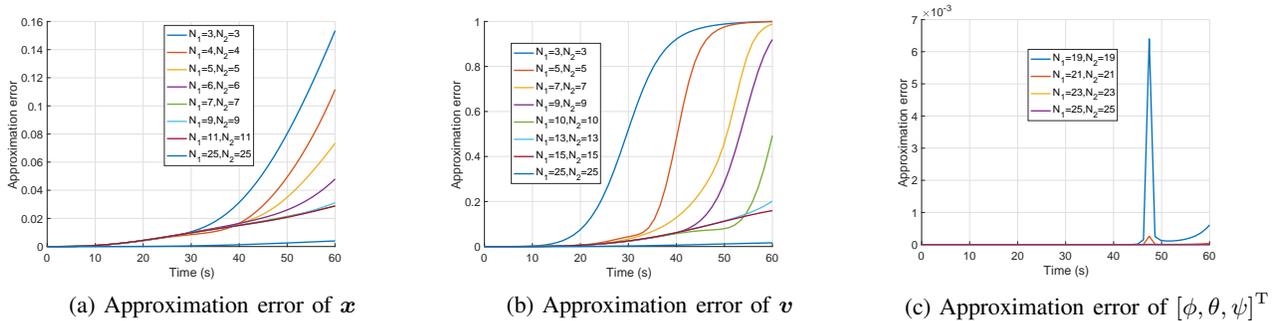

 \captionsetup[subfigure]{justification=centering}
 \centering
 \begin{subfigure}{0.32\textwidth}
{\includegraphics[scale=0.25]{ae_x.eps}}
\caption{Approximation error of $\boldsymbol{x}$}
\label{fig:approximation_e_x}
 \end{subfigure}
\label{fig:approximation_error_x_position}
 \begin{subfigure}{0.32\textwidth}
{\includegraphics[scale=0.25]{ae_v.eps}}
 \caption{Approximation error of  $\boldsymbol{v}$}
\label{fig:approximation_e_v}
 \end{subfigure}
\label{fig:approximation_error_v_velocity}
 \begin{subfigure}{0.32\textwidth}
{\includegraphics[scale=0.25]{ae_angles.eps}}
\caption{Approximation error of $[\phi,\theta,\psi]^\mathrm{T}$}
\label{fig:approximation_e_angles}
\end{subfigure}
 \caption{Approximation error for position, velocity and Euler angles.}
\label{fig:approximation_error}
\end{figure*}
\begin{remark}
Based on Theorem \ref{prop:f_k_and_g_k_hat} and Theorem \ref{prop:f_k_1_lessthan_f_k}, we can consider truncating the higher dimensional lifted space \eqref{eqn:lifted_space_dyn_final} 
to obtain a lower (finite) dimensional linear state space model for any $\boldsymbol{\omega}\in\mathcal{D}_{\widehat{\omega}}$ and $\boldsymbol{v}\in\mathcal{D}_{\widehat{v}}$. 
\end{remark}

\begin{prop}
\textit{The truncated lifted linear state space model given by Eqn. \eqref{eqn:lifted_space_dyn_final} is controllable for any $\boldsymbol{\omega}\in\mathcal{D}_{\widehat{\omega}}$ and $\boldsymbol{v}\in\mathcal{D}_{\widehat{v}}$. In other words, the pair ($\mathcal{A},\mathcal{\widetilde{B}}$) is controllable.}
\end{prop}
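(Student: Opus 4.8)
The plan is to reduce the claim to three decoupled controllability tests by exploiting the fact that $\mathcal{A}=\mathrm{bdiag}(\widehat{A}_1,\widehat{A}_2,\widehat{A}_3)$ and $\widetilde{\mathcal{B}}=\mathrm{bdiag}(\mathbf{I}_4,\mathbf{0}_{16},\mathbf{I}_{16(N_1-1)},\mathbf{0}_3,\mathbf{I}_{3(N_2-1)})$ carry a conformable block partition: the block $\mathbf{I}_4$ sits below $\widehat{A}_1$, the pair $\widetilde{\mathcal{B}}_2:=\mathrm{bdiag}(\mathbf{0}_{16},\mathbf{I}_{16(N_1-1)})$ sits below $\widehat{A}_2$ (which has size $16N_1$), and $\widetilde{\mathcal{B}}_3:=\mathrm{bdiag}(\mathbf{0}_3,\mathbf{I}_{3(N_2-1)})$ sits below $\widehat{A}_3$ (size $3N_2$). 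Consequently the Kalman controllability matrix $[\widetilde{\mathcal{B}}\ \ \mathcal{A}\widetilde{\mathcal{B}}\ \ \cdots\ \ \mathcal{A}^{N-1}\widetilde{\mathcal{B}}]$ is itself block diagonal, so $(\mathcal{A},\widetilde{\mathcal{B}})$ is controllable if and only if each of $(\widehat{A}_1,\mathbf{I}_4)$, $(\widehat{A}_2,\widetilde{\mathcal{B}}_2)$, and $(\widehat{A}_3,\widetilde{\mathcal{B}}_3)$ is controllable. The first is immediate, since $\mathbf{I}_4$ already has full rank $4$.

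For $(\widehat{A}_2,\widetilde{\mathcal{B}}_2)$ I would use the explicit structure of $\widehat{A}_2$: up to a nonzero scalar it is the nilpotent block shift that maps the $(j{+}1)$-th $16$-dimensional block of a vector onto its $j$-th block and annihilates the last block. Writing $v=(v_{(0)},\dots,v_{(N_1-1)})$ with $v_{(j)}\in\mathbb{R}^{16}$, one has $\mathrm{Im}\,\widetilde{\mathcal{B}}_2=\{v: v_{(0)}=\mathbf{0}\}$ and $\widehat{A}_2\,\mathrm{Im}\,\widetilde{\mathcal{B}}_2=\{v: v_{(N_1-1)}=\mathbf{0}\}$, and for $N_1\ge 2$ these two subspaces already span $\mathbb{R}^{16N_1}$; hence the reachable subspace is the whole state space after two steps and the pair is controllable. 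The pair $(\widehat{A}_3,\widetilde{\mathcal{B}}_3)$ is dispatched by the identical computation with block size $3$ and $N_2\ge 2$. Assembling the three cases gives controllability of $(\mathcal{A},\widetilde{\mathcal{B}})$ in \eqref{eqn:lifted_space_dyn_final}; moreover, since $\mathcal{A}$ and $\widetilde{\mathcal{B}}$ are constant matrices, the property holds uniformly, in particular for every $(\boldsymbol{\omega},\boldsymbol{v})\in\mathcal{D}_{\widehat{\omega}}\times\mathcal{D}_{\widehat{v}}$.

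The argument is mostly bookkeeping, so there is no serious obstacle; the two points that need care are (i) confirming that the directions left unactuated by $\widetilde{\mathcal{B}}$ are exactly $\widehat{\underline{g}}_0$ and $\widehat{f}_0$, which holds because the right-hand sides of the equations for $\dot{\widehat{\underline{g}}}_0$ and $\dot{\widehat{f}}_0$ inherited from \eqref{eqn:g_k_attitude_underline} and \eqref{eqn:f_k_position} contain no input term, and (ii) recording the hypothesis $N_1,N_2\ge 2$, without which the corresponding blocks of both $\mathcal{A}$ and $\widetilde{\mathcal{B}}$ vanish and controllability fails; this is not a genuine restriction, as any nontrivial truncation already uses $N_1,N_2\ge 2$. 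An equivalent route is the Hautus test $\mathrm{rank}[\lambda I_N-\mathcal{A}\ \ \widetilde{\mathcal{B}}]=N$ for every $\lambda$ in the spectrum of $\mathcal{A}$, which can be verified blockwise in the same way: the nonzero eigenvalues lie in the fully actuated first block and are covered by its $\mathbf{I}_4$ column, while at $\lambda=0$ the deficiency of each nilpotent block is supplied by the directly actuated columns of $\widetilde{\mathcal{B}}$ through the shift structure.
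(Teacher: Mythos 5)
Your proof is correct and takes essentially the same route as the paper: both verify the Kalman rank condition by exploiting the block-diagonal structure of $\mathcal{A}$ and $\widetilde{\mathcal{B}}$, with the unactuated directions $\widehat{\underline{g}}_0$ and $\widehat{f}_0$ reached after one application of $\mathcal{A}$ via the shift blocks. Your write-up is somewhat tighter in that it decouples the test into three sub-pairs, stops at $\mathcal{A}\widetilde{\mathcal{B}}$ rather than computing all powers, and records the hypothesis $N_1,N_2\geq 2$, which the paper leaves implicit.
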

\begin{proof}
The $N\times N^2$ controllability matrix $C$ for the linear model is given as $C=[\widetilde{\mathcal{B}}\;\mathcal{A}\widetilde{\mathcal{B}},\;\mathcal{A}^2\widetilde{\mathcal{B}}\dots\mathcal{A}^{N-1}\widetilde{\mathcal{B}}]$.
The expression for $\mathcal{A}^j\widetilde{\mathcal{B}}$ can be given as follows:
\begin{align}
  \mathcal{A}^j\widetilde{\mathcal{B}}=\left\{ \begin{array}{l}
  \mathrm{bdiag}(\mathbf{0}_{4+16j},\mathbf{I}_{16N_1-16j+3N_2}),\;j\in[1,N'']_d \\
    \mathrm{bdiag}(\mathbf{0}_{4+16N_1+3j},\mathbf{I}_{3N_2-3j}),\;j\in[N''+1,N]_d \\
  \end{array}
  \right.\nonumber
\end{align}
where $N''=N-3N_2$.
The matrices $\mathcal{\widetilde{B}}$ and $\mathcal{A}^j\mathcal{\widetilde{B}}$ have $16N_1-16j+3N_2-3j$ common independent columns and the matrix $\mathcal{A}^j\mathcal{\widetilde{B}}$ has $16N_1-16j+3N_2-3j$ independent columns. In addition, $\mathcal{\widetilde{B}}$ has $4+16+3$ independent columns which are not common to any $\mathcal{A}^j\mathcal{\widetilde{B}}$.
Therefore, rank of $C$ is $4+16N_1+3N_2=N$ and hence the lifted linear state space model is controllable.
This completes the proof.
\end{proof}

\section{Numerical simulations\label{sec:results}}
In this section, we present numerical simulations to validate the proposed Koopman operator based approximation of the nonlinear dynamics \eqref{eqn:nonlinear_quad_dynamics} with the derived linear state space model \eqref{eqn:lifted_space_dyn_final}. Simulation studies have been carried out using MATLAB R2020b on an Intel Core i7 2.2GHz processor. 
 \subsection{Approximation error}
The goodness of fit for the lifted-space dynamics with the exact nonlinear model for the quadrotor was calculated using the approximation error as ${|\boldsymbol{a}-\boldsymbol{b}|}/{|\boldsymbol{b}|}$ 
where $\boldsymbol{a}$ is computed after integrating the lifted-space dynamics \eqref{eqn:lifted_space_dynamics} and $\boldsymbol{b}$ is computed after integrating the nonlinear dynamics \eqref{eqn:nonlinear_quad_dynamics}. A random control input taking values in $[-0.005,0.005]$ is used to propagate the lifted linear state space model and nonlinear dynamics and is given by $\widetilde{\boldsymbol{u}}=0.001\gamma(t)\text{sin}(0.1 t)$ where $\gamma(t)$ is a random number taken from the uniform distribution $[-5,5]$
and the initial conditions are as follows: $R(0)=\mathbf{I}_3,\; \boldsymbol{\omega}(0)=[0.05\;0.05\;0.05]^\mathrm{T},\;\boldsymbol{v}(0)=[0.1\;0.1\;0.1]^\mathrm{T}$. 
Figs. \ref{fig:approximation_e_x}, \ref{fig:approximation_e_v} and \ref{fig:approximation_e_angles} show the variation of approximation error for $\boldsymbol{x}$, $\boldsymbol{v}$ and $[\phi,\theta,\psi]$ respectively for different values of $N_1$ and $N_2$. As expected, we observe that as the dimension of the lifted-space increases, the approximation error decreases.
{We have also used both sinusoidal control inputs and constant signals and observed that the trend in the approximation errors for position, velocity and attitude is nearly the same.}

{Note that irrespective of the dimension $N$, the approximation error becomes larger with time mainly because the integral of the truncated terms increases with time.}

\subsection{Relative error between $\widetilde{\mathcal{B}}\boldsymbol{U}^\star$ and $\mathcal{B}\widetilde{\boldsymbol{u}}$}
We compute the state evolution of the lifted linear system \eqref{eqn:lifted_space_dyn_final} when random control inputs $U^\star$ whose values lie in $[-30,30]$ are applied to it. The obtained states are then used to compute the state dependent matrix $\mathcal{B}$. The control input $\widetilde{\boldsymbol{u}}$ corresponds to the solution to the least square problem given in \eqref{prob:least}. We take $N_1=N_2=15$. It is observed that the relative error between $\widetilde{\mathcal{B}}\boldsymbol{U}^\star$ and $\mathcal{B}\widetilde{\boldsymbol{u}}$ is approximately $3\%$. When $N_1=N_2=25$, the relative error becomes approximately $1\%$.

{\subsection{Comparison with prior method}
We compare our approach with a recent Koopman based method for modeling quadrotor dynamics on $SE(3)$ presented in~\cite{abraham2019active_data_quad_dyn}, in which a data-driven approach was used to approximate the matrices $\mathcal{A}$ and $\mathcal{B}$ based on the following 18 observable functions:}
${z(x)=\left[a_g, \boldsymbol{\omega}, \boldsymbol{v}, g(\boldsymbol{v},\boldsymbol{\omega})\right]^\mathrm{T} \in \mathbb{R}^{18}}$, 
{where $a_g$ is the gravity vector and $g(\boldsymbol{v},\boldsymbol{\omega})=[\boldsymbol{v}_{3} \boldsymbol{\omega}_{3},$ $ \boldsymbol{v}_{2} \boldsymbol{\omega}_{3}, \boldsymbol{v}_{3} \boldsymbol{\omega}_{1}, \boldsymbol{v}_{1} \boldsymbol{\omega}_{3}, \boldsymbol{v}_{2} \boldsymbol{\omega}_{1},\boldsymbol{\omega}_{2} \boldsymbol{\omega}_{3},  \boldsymbol{\omega}_{1} \boldsymbol{\omega}_{3}, \boldsymbol{\omega}_{1} \boldsymbol{\omega}_{2}]$, where $\boldsymbol{v}_i=(\boldsymbol{v})_i$ and $\boldsymbol{\omega}_i=(\boldsymbol{\omega})_i$. In contrast to \cite{abraham2019active_data_quad_dyn}, our approach does not require any data and the observable functions were not guessed. From Table \ref{tab:error},  it can be observed that the approximation error obtained with our approach for $N_1=N_2=25$ is one order of magnitude less than  \cite{abraham2019active_data_quad_dyn} at $t=60s$. However, the approximation error using \cite{abraham2019active_data_quad_dyn} was slightly better than our approach when $N_1=N_2=15$. }
\begin{table}[ht]
\centering
    \centering
    \begin{tabular}{|c|c|c|c|}
       \hline Approx. error & $N_1=N_2=25$ & $N_1=N_2=15$ &  Method \cite{abraham2019active_data_quad_dyn}  \\
\hline $\boldsymbol{x}$ & $4.923\times 10^{-3}$ & $1.697\times 10^{-2}$  & $1.564\times 10^{-2}$ \\
\hline $\boldsymbol{v}$ & $4.167\times 10^{-3}$ & $1.893\times 10^{-2}$  & $1.714\times 10^{-2}$ \\
\hline $[\phi\;\; \theta \;\; \psi]^\mathrm{T}$ & $5.247\times 10^{-4}$  & $2.457\times 10^{-3}$  & $2.243\times 10^{-3}$ \\
\hline
    \end{tabular}
    \caption{{Comparisons with prior methods demonstrate an
order-of-magnitude improvement in approximation error.}}
    \label{tab:error}
    \end{table}

\section{Conclusion\label{sec:conclusions}}
In this paper, we used the framework of Koopman operator to describe the nonlinear dynamics of a quadrotor on $SE(3)$ by means of a linear state space model evolving on the lifted space. We proposed a systematic way to derive a sequence of observable functions that span the lifted space and proved that the latter sequence converges pointwise to the zero function.
This result allowed us to choose a finite subset of this set of functions to form a truncated (approximation of the)  lifted-space. Our simulations indicated that as the dimension of the lifted space dynamics increases, the approximation error decreases. In our future work, we plan to use the derived lifted space linear model for design of controllers for quadrotors.

 \bibliography{main.bib}

\end{document}